\def\1{\mathbf{1}}
\DeclareMathOperator{\erf}{erf}
\newtheorem{thm}{Theorem}
\begin{document}

\title{Uncertainty Quantification in Hybrid Dynamical Systems}

\author{Tuhin Sahai$^{\dagger}$ \and Jos{\'e} Miguel Pasini$^{\dagger}$}
\date{$^\dagger$United Technologies Research Center, 411 Silver Lane, East Hartford, CT 06108, USA}

\maketitle
\begin{abstract}
Uncertainty quantification (UQ) techniques are frequently used
to ascertain output variability in systems with parametric
uncertainty. Traditional algorithms for UQ are either
system-agnostic and slow (such as Monte Carlo) or fast with
stringent assumptions on smoothness (such as polynomial chaos and
Quasi-Monte Carlo). In this work, we develop a fast UQ approach
for hybrid dynamical systems by extending the polynomial chaos
methodology to these systems. To capture discontinuities, we use a wavelet-based Wiener-Haar expansion. We develop a boundary layer approach to propagate uncertainty through \emph{separable} reset conditions. We also introduce a transport theory based approach for propagating uncertainty through hybrid dynamical systems. Here the expansion yields a set of hyperbolic equations that are solved by integrating along characteristics. The solution of the partial differential equation along the characteristics allows one to quantify uncertainty in hybrid or switching dynamical systems. The above methods are demonstrated on example problems.

\end{abstract}


\section{Introduction}


Uncertainty Quantification (UQ) is an area of mathematics that
is used to quantify output distributions given parametric
uncertainty. Traditional approaches include Monte Carlo and
Quasi-Monte Carlo methods~\cite{McQMc}, response surface
methods~\cite{Response1:book,Response2:book} as well as polynomial
chaos and probabilistic collocation based
approaches~\cite{PolyReview}. The polynomial chaos approach for
uncertainty quantification was originally proposed by Norbert
Wiener~\cite{Wiener}. Assuming that one is given input
uncertainty in the form of distributions associated with various
parameters of the system, polynomial chaos/probabilistic
collocation methods provide an approach for fast uncertainty
quantification under the assumption of smooth dynamics. In
particular, polynomial chaos provides exponential convergence
for smooth systems and processes with finite
variance~\cite{PolyReview}. Polynomial chaos based methods have
been used for a multitude of
applications, see~\cite{Allen2009,Elman2011,Ghanem1998,
Najm2009,TuhinPoly,Xiu2003,Zabaras2008,Multigpc} for examples. Note that, depending on
the application, one can combine various UQ
approaches. For example, a combination of polynomial chaos and the response surface
methodology has been used to develop
probabilistic collocation methods for discrete distributions in~\cite{TuhinPoly}.

In this work, we focus on developing UQ techniques for hybrid dynamical systems.
Hybrid dynamical systems theory is used to model systems with both
discrete and continuous dynamics~\cite{SHS:book}. Examples
include the bouncing ball automaton~\cite{ZenoReg}, biological
networks~\cite{HybridCompCont:book,BioEx1}, air traffic
management systems~\cite{AirTraffic}, communication
networks~\cite{CommNet}, elevators, and robotics, to name a few.
These systems frequently display rich dynamics not seen in continuous
systems. For example, Zeno behavior in hybrid systems is characterized by an infinite number of discrete switches in
finite time~\cite{ZenoReg,Zeno}. Hybrid systems can be particularly challenging from an analysis standpoint since traditional techniques, such as polynomial chaos based methods, assume smoothness, rendering them inapplicable.

In this work, we develop polynomial chaos and transport theory based methods for
propagating uncertainty through hybrid systems. We assume that the domains associated with different modes of operation of the hybrid system do not overlap. We demonstrate that, by integrating over
appropriate time-varying regions, one can extend the polynomial
chaos framework to hybrid dynamical systems. We resolve the
issue of state resets~\cite{SHS:book} in the separable case by using boundary layer
approximations. To capture the discontinuities in the probability distributions of the output variables, we use a Haar-wavelet expansion~\cite{Haar1910}. This expansion has previously been used in the polynomial chaos setting to propagate uncertainty through dynamical systems close to bifurcation points~\cite{Najm2009,LeMaitre2004,LeMaitre2004a}. Here we develop a methodology to propagate uncertainty through systems with discontinuities in dynamics and output along with state resets. We also develop a transport theory based approach that allows one to propagate the uncertainties through the various modes of the hybrid dynamical system.

Our paper is organized as follows: in section~\ref{Sec:problem} we define hybrid dynamical systems and the problem of uncertainty quantification. In section~\ref{Sec:hyb_poly} we first construct the framework for polynomial chaos in the hybrid dynamical system setting~(\ref{Sec:hpc}). We then demonstrate the Haar wavelet expansion for hybrid polynomial chaos in~\ref{Sec:wavelets}. The handling of state resets is considered in section~\ref{Sec:resets}. Finally, the results on hybrid polynomial chaos are presented in~\ref{Sec:Results}. The transport operator theory based method for propagating uncertainty through hybrid dynamical systems is developed in section~\ref{sec:transporttheory} and conclusions are drawn in section~\ref{Sec:conclusions}.

\section{Problem definition}\label{Sec:problem}

Let  $S=(q, X, f, x(0), D, E, G, R)$ denote a hybrid system $S$, where \\
\centerline{
\begin{tabular}{l l}
$q$ & Set of discrete variables \\
$X$ & Set of continuous variables\\
$f:X\times q \rightarrow TX$& Vector field\\
$x(0)$ & Set of initial conditions\\
$D:q\rightarrow P(X)$ & Domain\\
$E$ & Set of discrete transitions\\
$G: E\rightarrow P(X)$ & Guard conditions\\
$R: E\times X \rightarrow P(X)$ & Reset map.\\
\end{tabular}}
In the above table, $TX$ is the tangent bundle of $X$ and $P(X)$
is the power set of $X$. For more details on the definition of
hybrid systems, see~\cite{SHS:book}.

We can use the following representation for hybrid systems,
\begin{equation}
\dot x = f(x,\lambda,q), \label{eq:fullsys}
\end{equation}
where $x \in X$ is a vector of state variables and the form of
$f(x,\lambda,q)$ is dictated by~$q$, which represents the mode
of operation of the hybrid dynamical system. The discrete state~$q$ is
determined by the guard conditions that dictate transitions
between modes (see Fig.~\ref{Fig:HySchematic}). The reset
functions $h_{i}(x)$ are a part of the reset map~$R$. In
Eqn.~\ref{eq:fullsys} let~$\lambda$ denote the vector of system
parameters and $x(0)$ the initial condition for the
system.

\begin{figure}
  \centering
  \includegraphics[width=0.7\hsize]{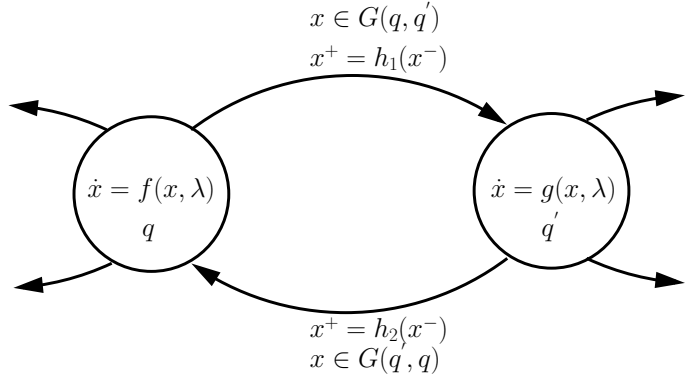}\\
\caption{Schematic for hybrid (switching) systems.
\label{Fig:HySchematic} }
\end{figure}

If the system parameters~$\lambda$ in Eqn.~\ref{eq:fullsys} are
uncertain (i.e., each $\lambda_{i}$ has an associated
distribution) then one typically desires to quantify the
time-varying moments (such as mean and variance) of~$x(t)$ (note that $x(0)$ may also be uncertain). As
mentioned earlier, although one can use Monte Carlo based
sampling methods~\cite{McQMc}, they are plagued by slow
convergence. In particular, the mean is expected to converge as
$1/\sqrt{N}$, where $N$ is the number of samples. Quasi-Monte
Carlo based sampling methods are expected to give a convergence
rate of $\log^d (N)/N$, where $d$ is the dimensionality of the
random space~\cite{QMC}, making these methods attractive for
problems in low dimensions. Polynomial chaos based methods
provide an alternative framework for uncertainty quantification
with exponential convergence for processes with finite variance,
but they too suffer from the curse of
dimensionality~\cite{PolyReview}. In the next section we extend
the polynomial chaos framework to hybrid systems.

\section{Polynomial chaos for hybrid dynamical systems}\label{Sec:hyb_poly}

Starting with a complete probability space $\Gamma$ given by
$(\Omega, \mathcal{F},\mathbb{P})$, where $\Omega$ is the sample
space, $\mathcal{F}$ is the $\sigma$-algebra on $\Omega$ and
$\mathbb{P}$ is a probability measure, let $L_{2}(\Gamma,X)$
denote the Hilbert space of square-integrable,
$\Gamma$-measurable, $X$-valued random elements. Then one can,
in general, define a polynomial chaos basis
$\{H_{\alpha}(\lambda(\omega))\}$, where $\lambda(\omega)$ is a
random vector and $\alpha = (\alpha_{1},\alpha_{2},\dots)$ is a
vector of non-negative indices. We denote the probability
density function of the random vector $\lambda$ by
$\rho(\lambda)$.

Generalized polynomial chaos (gPC)~\cite{BeyondWienerAskey}, provides a framework for
representing second-order stochastic processes $r\in
L_{2}(\Gamma,X)$ for arbitrary distributions of $\lambda$ by the following expansion:
\begin{equation}
r(\lambda) =
\displaystyle\sum_{|\alpha|=0}^{\infty}a_{\alpha}H_{\alpha}(\lambda),
\label{eq:expan1}
\end{equation}
where $|\alpha| = \sum_{i} \alpha_{i}$ is the sum of the indices of $\alpha$ and
$H_{\alpha}(\lambda)$ are orthogonal polynomials on $\Gamma$
with respect to $\rho(\lambda)$, i.e.
\begin{equation}
\displaystyle\int_{\Gamma}
\rho(\lambda)H_{\alpha}(\lambda)H_{\beta}(\lambda)d\lambda =
\delta_{\alpha\beta}, \label{eq:ortho}
\end{equation}
where $\delta_{\alpha\beta}$ is the Kronecker delta product.
Depending on $\rho(\lambda)$ one can generate an appropriate
orthogonal basis for representing~$r(\lambda)$. For example, if $\rho$ is
Gaussian, then the appropriate polynomial chaos basis is the set
of Hermite polynomials; if $\rho$ is the uniform distribution,
then the basis is the set of Legendre polynomials. For details
on the correspondence between distributions and polynomials
see~\cite{PolyReview,Ogura}. A framework to generate polynomials
for arbitrary distributions has been developed
in~\cite{BeyondWienerAskey}.

In practice, the expansion in Eqn.~\ref{eq:expan1} is truncated
at a particular order, say, $p$. One can then use Galerkin projections to
obtain a set of differential equations for the coefficients
$a_{\alpha}$ in Eqn.~\ref{eq:expan1}~\cite{PolyReview}.

We now extend the standard polynomial chaos framework to hybrid
dynamical systems despite the presence of switching and state resets. To the best of our knowledge, it is the first attempt
to develop tools for fast uncertainty quantification for this class of systems.

\subsection{Hybrid Polynomial Chaos}
\label{Sec:hpc}

Without loss of generality, consider the following two-mode
hybrid dynamical system as representative of systems in which the different operating modes are associated with non-overlapping regions:
\begin{equation}
\dot{x} = \begin{cases} f(x,\lambda) & \text{if $x\geq 0$} \\
g(x,\lambda) & \text{otherwise}.
\end{cases}
\label{eq:wlogsys}
\end{equation}
Here one desires to quantify $x(t;\lambda)$, i.e., determine
$x$ as a function of time $t$ and parameters $\lambda$. The system above has two
modes of operation determined by its state. One
can parameterize these modes in the following way:
\begin{eqnarray}
\1_{R_{1}}(x) &=& \begin{cases}
1 & \text{if $x\geq 0$} \\
0 & \text{otherwise}
\end{cases}
 \label{eq:indict_r1} \\
\1_{R_{2}}(x) &=& 1 - \1_{R_{1}}(x). \label{eq:indict_r2}
\end{eqnarray}
When $\1_{R_{1}}(x) = 1$ (corresponding to $x\geq 0$) the governing
differential equations are $f(x,\lambda)$, and when $\1_{R_{2}}(x)=1$
($x < 0$) the governing differential equations are
$g(x,\lambda)$. Thus, one can rewrite Eqn.~\ref{eq:wlogsys} as,
\begin{equation}
\dot x = \1_{R_{1}}(x) f(x,\lambda) + \1_{R_{2}}(x) g(x,\lambda).
\label{eq:homotop}
\end{equation}
This equation extends easily to $k$ modes of operation by constructing indicator functions for each mode of operation $\1_{R_{1}},\1_{R_{2}},\hdots,\1_{R_{k}}$ of the hybrid system. We now expand $x(t;\lambda)$ in the appropriate orthogonal
polynomial chaos basis,
\begin{equation}
x(t;\lambda) =
\displaystyle\sum_{|\alpha|=0}^{p}a_{\alpha}(t)H_{\alpha}(\lambda).
\label{eq:expx}
\end{equation}
Dropping the arguments of $a_{\alpha}(t)$ and
$H_{\alpha}(\lambda)$ for simplicity and using the above
relation with Eqn.~\ref{eq:homotop}, one gets
\begin{eqnarray}
\sum_{|\alpha|=0}^{p}\dot a_{\alpha} H_{\alpha} &=&
\1_{R_{1}}(\sum_{|\alpha|=0}^{p}a_{\alpha}H_{\alpha})f(\sum_{|\alpha|=0}^{p}a_{\alpha}H_{\alpha},\lambda)
\nonumber\\
&+& \1_{R_{2}}(\sum_{|\alpha|=0}^{p}a_{\alpha}H_{\alpha})
g(\sum_{|\alpha|=0}^{p}a_{\alpha}H_{\alpha},\lambda). \nonumber
\label{eq:exphomotop}
\end{eqnarray}
By multiplying the above relation by
$\rho(\lambda)H_{k}(\lambda)$, integrating over $\Gamma$, and
using orthogonality conditions, we get
\begin{eqnarray}
\dot a_{k}(t) &=&
\int_{R_{1}(t)}f(\sum_{|\alpha|=0}^{p}a_{\alpha}H_{\alpha},\lambda)\rho(\lambda)H_{k}(\lambda)d\lambda
\nonumber\\ &+&
\int_{R_{2}(t)}g(\sum_{|\alpha|=0}^{p}a_{\alpha}H_{\alpha},\lambda)\rho(\lambda)H_{k}(\lambda)d\lambda.
\label{eq:intregionpc}
\end{eqnarray}
Note that $R_{1}(t) = \{\lambda:
\sum_{|\alpha|=0}^{p}a_{\alpha}H_{\alpha} \geq 0\}$ and
$R_{2}(t) = \Gamma - R_{1}(t)$. Thus by evaluating the two
integrals one can evolve $a_{k}(t)$ for any index vector~$k$.
Note, however, that the regions of integration $R_{1}$
and~$R_{2}$ are time-dependent quantities and must be evaluated
at every instant in time. 
For a pictorial depiction of $R_{1}$ and $R_{2}$, see Fig.~\ref{Fig:Region}.

\begin{figure}
  \centering
  \psfrag{a}[][]{$\sum_{|\alpha|=0}^{p}a_{\alpha}(t)H_{\alpha}(\lambda)$}
  \psfrag{b}[][]{$R_{1}(t)$}
  \psfrag{c}[][]{$R_{2}(t)$}
  \includegraphics[width=0.7\hsize]{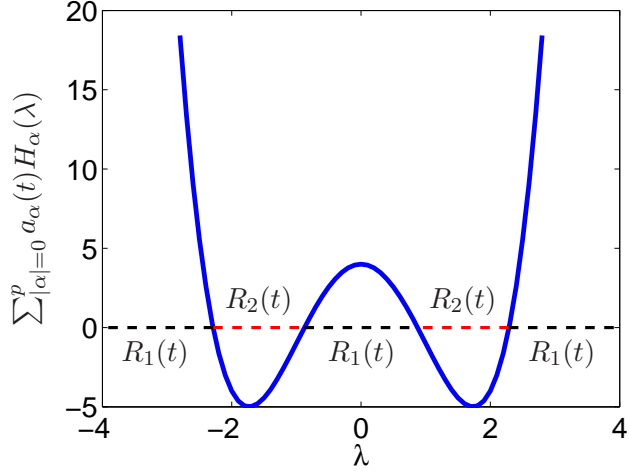}\\
  \caption{Pictorial representation of the regions of integration for the two integrals in
  Eqn.~\ref{eq:intregionpc}. In this example, the regions are not simply connected.
  \label{Fig:Region}}
 \end{figure}

\subsection{Hybrid Polynomial Chaos and wavelet expansions}
\label{Sec:wavelets}

Hybrid systems can display discontinuous behavior as a function of the uncertain parameters. In view of this, a smooth polynomial chaos expansion is expected to degrade as the discontinuities become more severe. In Ref.~\cite{LeMaitre2004} the authors develop a wavelet-based Wiener-Haar expansion to treat bifurcating (but smooth) dynamical systems with uncertain initial conditions that result in discontinuous behavior. In this section we adapt the Wiener-Haar expansion to hybrid dynamical systems.

In~\cite{LeMaitre2004, LeMaitre2004a}, output variables are expanded in terms of Wiener-Haar wavelets expressed as functions of the Cumulative Distribution Function (CDF) of the uncertain parameters. For simplicity, consider the univariate case. Here we denote the CDF of the uncertain parameter~$\lambda$ as $u(\lambda)$ and expand the state vector~$x$ as\footnote{For the multivariate case, see Ref.~\cite{LeMaitre2004}.}
\begin{equation}
x(t; \lambda) = x_0(t) + \sum_{j=0}^P \sum_{k=0}^{2^j-1} x_{jk}(t) \psi_{jk} (u(\lambda)),
\label{eq:x_waveletexp}
\end{equation}
where,
\[
\psi_{jk}(u) = 2^{j/2} \psi(2^j u - k), \quad \text{(with $j=0,1,\ldots$ and $k=0,\ldots,2^j-1$)}
\]
is a family of Haar wavelets~\cite{Haar1910}, defined in terms of the \emph{mother wavelet}:
\[
\psi(u) = \begin{cases}
1  & 0 \leq u < 1/2 \\
-1 & 1/2 \leq u < 1 \\
0 & \text{otherwise.}
\end{cases}
\]
The index~$j$ determines the scale of the wavelet and $k$ its displacement. Note that $\{\psi_{jk}\}$ is a family of orthonormal functions on the interval $[0,1]$ with respect to the uniform density. This makes the family $\{\psi_{jk} \circ u\}$ automatically orthonormal with respect to the probability density of~$\lambda$:
\[
\delta_{jl} \delta_{km} = \int_0^1 \psi_{jk}(u) \psi_{lm}(u) du
= \int (\psi_{jk} \circ u)(\lambda) \, (\psi_{lm} \circ u) (\lambda) \rho(\lambda) d\lambda.
\]
Additionally, all $\psi_{jk}$'s are orthogonal to the constant function on $[0,1]$, which implies that the mean of $x$ is given by the first term in the expansion
\[
x_0(t) = \int x(t;\lambda) \rho(\lambda) d\lambda
\]
and that the variance is
\[
\sigma^2(t) = \sum_{j=0}^P \sum_{k=0}^{2^j-1} x_{jk}^2(t).
\]

We now use this expansion on a switching oscillator example:
\begin{align}
\ddot x + c\dot x + x + \lambda &= 0\,\,\,\text{if $x \geq 0$} \nonumber \\
\ddot x + c\dot x + x - \lambda &= 0\,\,\,\text{if $x < 0$},
 \label{eq:SHMswitch}
\end{align}
which can be rewritten as,
\begin{align*}
\dot{x} &= y \\
\dot{y} &= - cy -x - \lambda \mathbf{1}_{R_1}(x) + \lambda \mathbf{1}_{R_2}(x).
\end{align*}
The expansion in this case is
\begin{align*}
x(t;\lambda) &= x_0(t) + \sum_{j=0}^P \sum_{k=0}^{2^j-1} x_{jk}(t) \psi_{jk} (u(\lambda)) \\
y(t;\lambda) &= y_0(t) + \sum_{j=0}^P \sum_{k=0}^{2^j-1} y_{jk}(t) \psi_{jk} (u(\lambda)).
\end{align*}
Projecting these equations onto the basis functions yields,
\begin{align*}
\dot{x}_0 &= y_0 \\
\dot{y}_0 &= -c y_0 - x_0  - \int_0^1 \lambda(u) \mathbf{1}_{R_1}(x(u)) du + \int_0^1 \lambda(u) \mathbf{1}_{R_2}(x(u)) du \\
\dot{x}_{jk} &= y_{jk} \\
\dot{y}_{jk} &= -c y_{jk} - x_{jk}  - \int_0^1 \lambda(u) \mathbf{1}_{R_1}(x(u)) \psi_{jk}(u) du
+ \int_0^1 \lambda(u) \mathbf{1}_{R_2}(x(u)) \psi_{jk}(u) du
\end{align*}
Note that to compute $\lambda(u)$ we invert the CDF~$u(\lambda)$.

To compute the integrals needed to evolve these equations numerically, we take advantage of the fact that Haar wavelets are piecewise constant. Namely, for a given truncation order~$P$, if we divide $[0,1]$ into $2^{P+1}$ equal subintervals, both $\psi_{jk}$ (with $j \leq P$) and the truncated expansion for $x$ (and therefore the indicator functions) are constant in each subinterval. This implies that in each subinterval we only need to calculate the integral of $\lambda(u)$, which is known a priori. For the case of a Gaussian $\lambda \sim N(\mu,\sigma^2)$, we have
\[
\lambda(u) = \mu + \sigma \sqrt{2} \erf^{-1} (2u-1),
\]
which has a primitive,
\[
\int \lambda(u) du = \mu u - \sigma
\frac{1}{\sqrt{2\pi}} \exp\left\{ -[\erf^{-1}(2u-1)]^2  \right\}.
\]
Therefore the contribution to $\dot{y}_{jk}$ of the integrals in each subinterval $l=0,\ldots,2^{P+1}-1$ is either zero or $\pm 2^{j/2}$ times the precomputed value
\[
\int_{l/2^{P+1}}^{(l+1)/2^{P+1}} \!\!\!\!\!\! \lambda(u) du.
\]
Section~\ref{Sec:Results} presents the results obtained with the Wiener-Haar wavelet expansion in Eq.~\ref{eq:x_waveletexp} for hybrid dynamical systems.

\subsection{Modeling state resets}
\label{Sec:resets}


A significant challenge that hybrid dynamical systems present is
the possibility of state resets~\cite{SHS:book}. When a hybrid
system switches from one mode to another, the state of the
system can, in general, be reset discontinuously. For example,
in the case of the bouncing ball~\cite{SHS:book}, the velocity
of the ball changes discontinuously after every impact. When the
hybrid system transitions from mode $q$ to~$q'$ the state resets are typically represented as,
\begin{equation}
x^{+} = h(x^{-}),
\label{eq:reset}
\end{equation}
where $x^{-}$ and $x^{+}$ are the states of the system before and after the reset. Such discontinuities cannot be
easily accommodated within the hybrid polynomial chaos framework as described in the previous sections.

To circumvent this problem, one can construct a boundary layer in the vicinity of the guard condition. We also introduce a dummy vector ($z$) that tracks the state $x$ outside the boundary layer and is set to $x^{-}$ within the boundary layer. Note that we assume \emph{separability} of the states, i.e. the guard conditions (which determine the switching between modes of operation) can be written independently of the state reset conditions in Eqn.~\ref{eq:reset}. In other words, the states that determine the guard conditions do not participate in the state reset. Let the reset condition be in terms of vector $x$ in Eqn.~\ref{eq:reset}, and the guard conditions be in terms of vector $y$ (given by $\left\{y: g(y) = 0\right\}$). Note that, $\left[x \quad y\right]^{T}$ represents the entire state vector with the following governing equation,
\begin{eqnarray}
\dot x &=& f_{1}(x,y) \nonumber \\
\dot y &=& f_{2}(x,y).
\label{eq:sep_sys}
\end{eqnarray}

We now construct a boundary layer around the guard condition for vector $y$ as follows:

\begin{equation}
\begin{pmatrix}
\dot x\\
\dot y\\
\dot z
\end{pmatrix} = \begin{cases}
\begin{pmatrix}
f_{1}(x,y)\\
f_{2}(x,y)\\
(x - z)/\epsilon
\end{pmatrix} & \mbox{if}: |g(y)|\geq\epsilon \\
\\
\begin{pmatrix}
[h(z) - x]/\epsilon\\
\epsilon f_{2}(x,y)\\
0
\end{pmatrix} & \mbox{otherwise.}
\end{cases}
\label{eq:blayer}
\end{equation}

The above dynamical system is constructed such that $x^{-}$
evolves to $h(x^{-})$ in $\Delta t\approx\epsilon$, where $\epsilon$ is a small parameter.


By replacing each reset condition with an equation of the form
given by Eqn.~\ref{eq:blayer}, one obtains a new dynamical system without resets that approximates the original. On this new dynamical system one can use the expansion from Sec.~\ref{Sec:hpc} and evolve it using Eqn.~\ref{eq:intregionpc}. In other words, the framework
generalizing polynomial chaos to hybrid systems can be augmented using
Eqn.~\ref{eq:blayer} to include state resets.

To illustrate the procedure presented above we turn to the classic bouncing ball example~\cite{ZenoReg}: we consider the dynamics of a ball bouncing on a floor with coefficient of restitution $\gamma < 1$ under the action of gravity of uncertain magnitude ($\mu(g) = 9.8\,m/s^{2}$ and $\sigma(g) = 0.2\,m/s^2$). Thus, every time
the ball makes contact with the floor the velocity $v^{-}$ is reset to a new value given by $v^{+}=-\gamma v^{-}$. The guard condition for resetting the velocity is given by $y(t)=0$ (where $y(t)$ is the height of the ball above the floor at time $t$). The equations for the bouncing ball are given by,
\begin{eqnarray}
\dot y &=& v, \nonumber\\
\dot v &=& -g,
\label{eq:balleqs}
\end{eqnarray}
with the reset condition at $y=0$: $v^{+}=-\gamma v^{-}$. Thus, if one uses the boundary layer approximation in Eqn.~\ref{eq:blayer} we get,

\begin{equation}
\begin{pmatrix}
\dot y\\
\dot v\\
\dot z
\end{pmatrix} = \begin{cases}
\begin{pmatrix}
v\\
-g\\
(v - z)/\epsilon
\end{pmatrix} & \mbox{if}: |y|\geq\epsilon \\
\\
\begin{pmatrix}
\epsilon v\\
(\gamma z - v)/\epsilon \\
0
\end{pmatrix} & \mbox{otherwise.}
\end{cases}
\label{eq:blayer_ball}
\end{equation}
We now use the hybrid polynomial chaos expansion in Eqn.~\ref{eq:intregionpc} along with the Wiener-Haar wavelet basis functions. The Monte Carlo simulations on the bouncing ball are shown in Fig.~\ref{Fig:MonteBall}. The average or nominal trajectory is also shown. In Fig.~\ref{Fig:ByLayerBall} we compare the nominal trajectory (mean trajectory from Monte Carlo) with the mean predicted using the boundary layer expansion with $\epsilon=0.01$. As shown, the boundary layer accurately approximates the mean over multiple state resets events (in this case, each impact with the floor).

\begin{figure}
  \centering
   \subfigure[]{\includegraphics[scale=0.15]{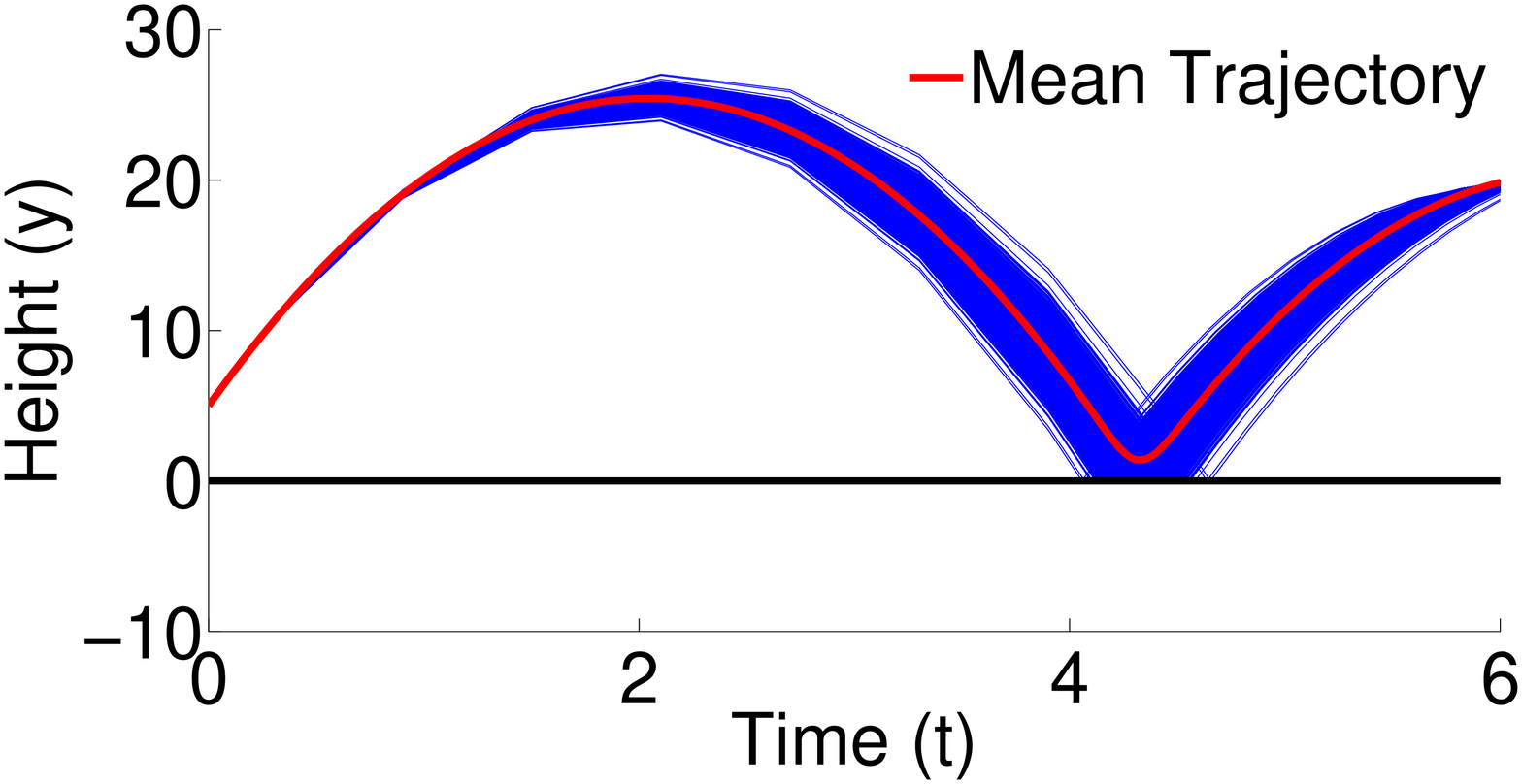}\label{Fig:MonteBall}}
   \subfigure[]{\includegraphics[scale=0.15]{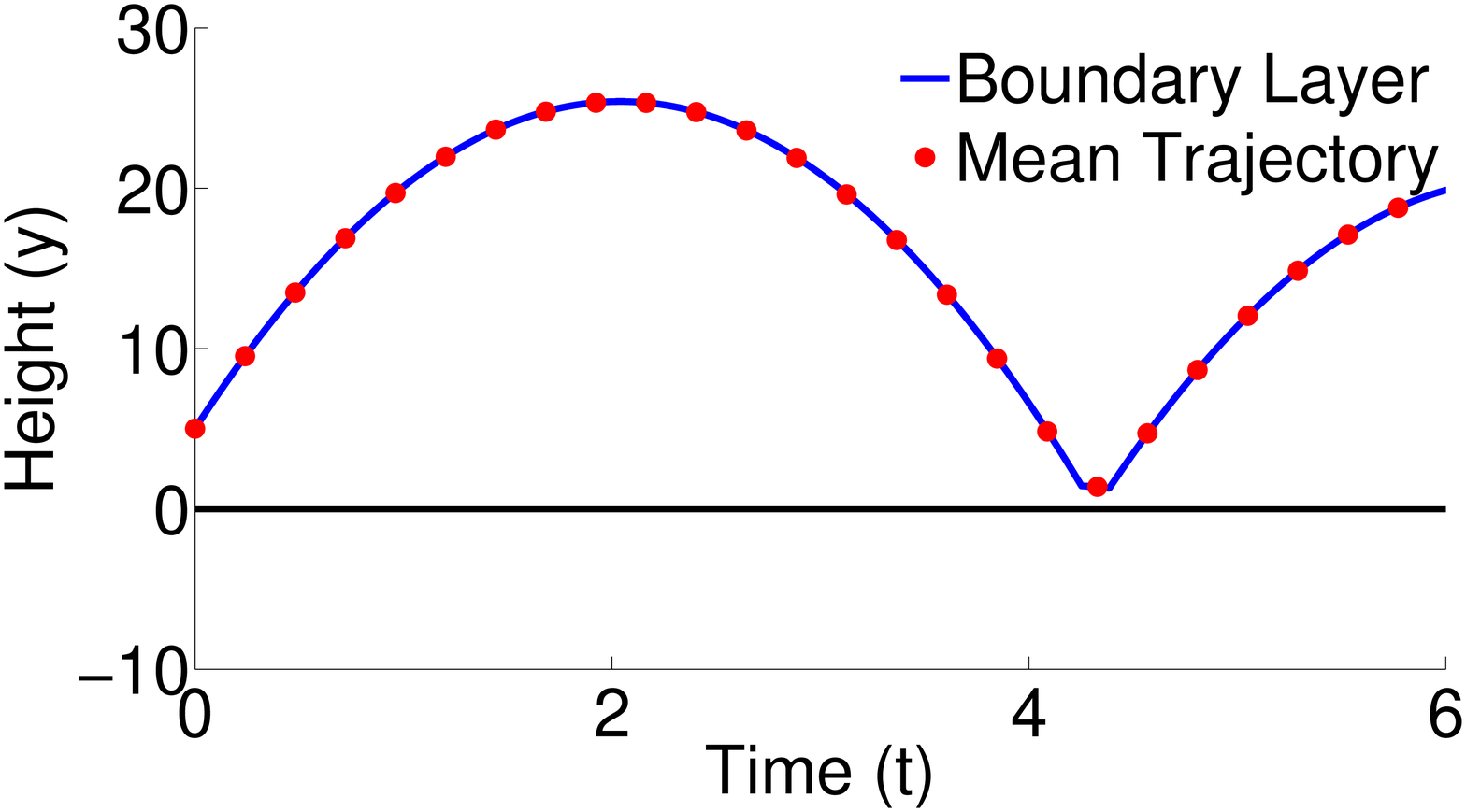}\label{Fig:ByLayerBall}}
  \caption{a) Monte Carlo simulation of the bouncing ball system with uncertain gravitational acceleration. b) Nominal bouncing ball trajectory compared with the mean trajectory obtained through the wavelet-based hybrid UQ approach with the boundary layer approximation ($\epsilon = 0.01$). \label{Fig:DistBall}}
 \end{figure}

\subsection{Results}\label{Sec:Results}

To demonstrate the hybrid polynomial chaos approach on hybrid dynamical systems we
consider the simple yet challenging example of a switching
oscillator given by Eqn.~\ref{eq:SHMswitch}.
\begin{align}
\ddot x + c\dot x + x + \lambda &= 0\,\,\,\text{if $x \geq 0$} \nonumber \\
\ddot x + c\dot x + x - \lambda &= 0\,\,\,\text{if $x < 0$}. \nonumber
\end{align}
The value of $c$ is deterministic and equal to~0.5. Here we
consider three cases with $\lambda$ normally distributed with:
$\mu(\lambda)=-10$ and $\sigma(\lambda)=2$ (case~1), $\mu(\lambda)=10$ and
$\sigma(\lambda)=2$ (case~2), and $\mu(\lambda)=0$ and $\sigma(\lambda)=1$ (case~3). In all cases we assume that the initial
conditions are deterministic and given by $\left[x(0),\dot
x(0)\right] = \left[10^{-2},1.0\right]$.

\subsubsection{Case 1: $\mu(\lambda)=-10$, $\sigma(\lambda)=2$}

Let us start with the case when $\mu(\lambda)=-10$ and
$\sigma(\lambda)=2$ in Eqn.~\ref{eq:SHMswitch}. A representative
trajectory for the dynamics of the system is shown in
Fig.~\ref{Fig:Trajcase1}. The corresponding histogram for
$x(20.0;\lambda)$ is shown in Fig.~\ref{Fig:Histcase1}. Most
importantly, one desires to compute the mean and variance of
$x(t;\lambda)$ as a function of time. In the system given by
Eqn.~\ref{eq:SHMswitch}, we expand $x(t;\lambda)$ using
Eqn.~\ref{eq:expx} and perform a Galerkin projection as shown in
Eqn.~\ref{eq:intregionpc}. One then gets a system of equations
for the coefficients of expansion in
Eqn.~\ref{eq:expx}. These coefficients, once computed, can be
used to calculate the moments of the distribution of
$x(t;\lambda)$.

We compare the results obtained from hybrid polynomial chaos with those
obtained using Monte Carlo and Quasi-Monte Carlo based methods.
In particular, we use a Weyl sequence~\cite{Niederreiter1992}
along with inverse transform sampling~\cite{InvTran:book} to
generate the Quasi-Monte Carlo samples. We find that the results
(in the first two moments) from $5000$ samples of Monte Carlo,
$3000$ samples of Quasi-Monte Carlo and the Wiener-Haar hybrid PC
expansion with $P=3$ are visually indistinguishable (see
Figs.~\ref{Fig:Meanxcase1} and~\ref{Fig:Varxcase1}). Treating
$5000$ Monte Carlo samples as baseline, we find that the hybrid PC expansion has a maximum error of $5\times
10^{-2}$ in the prediction of $\mu(x(t;\lambda))$.

%
\begin{figure}
  \centering
  \subfigure[]{\includegraphics[scale=0.4]{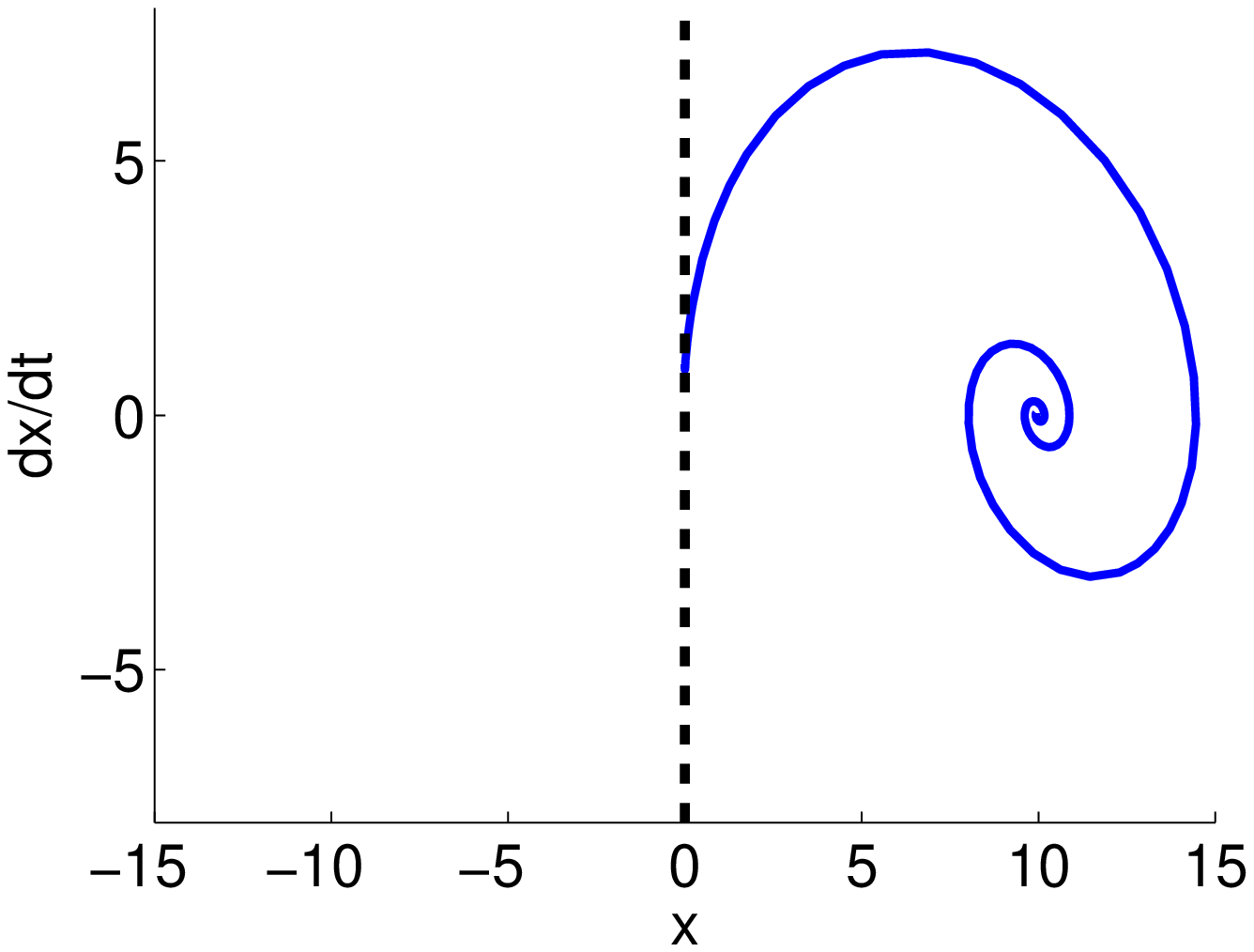}\label{Fig:Trajcase1}}
  \subfigure[]{\includegraphics[scale=0.4]{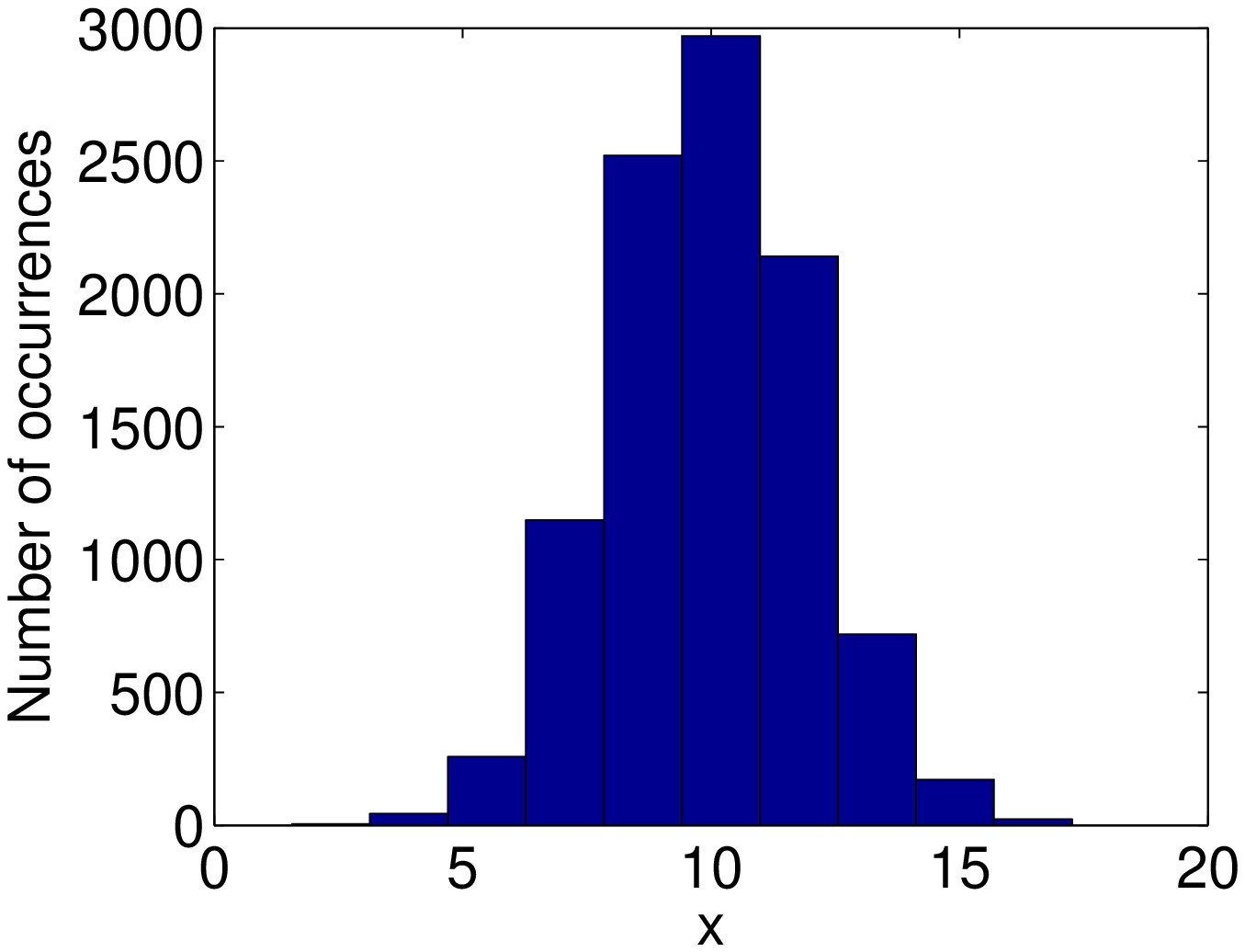}\label{Fig:Histcase1}}
  \caption{a) A representative trajectory for case $1$. b) Histogram of $x(20;\lambda)$ for case $1$.}
 \end{figure}

%
\begin{figure}
  \centering
  \subfigure[]{\includegraphics[scale=0.3]{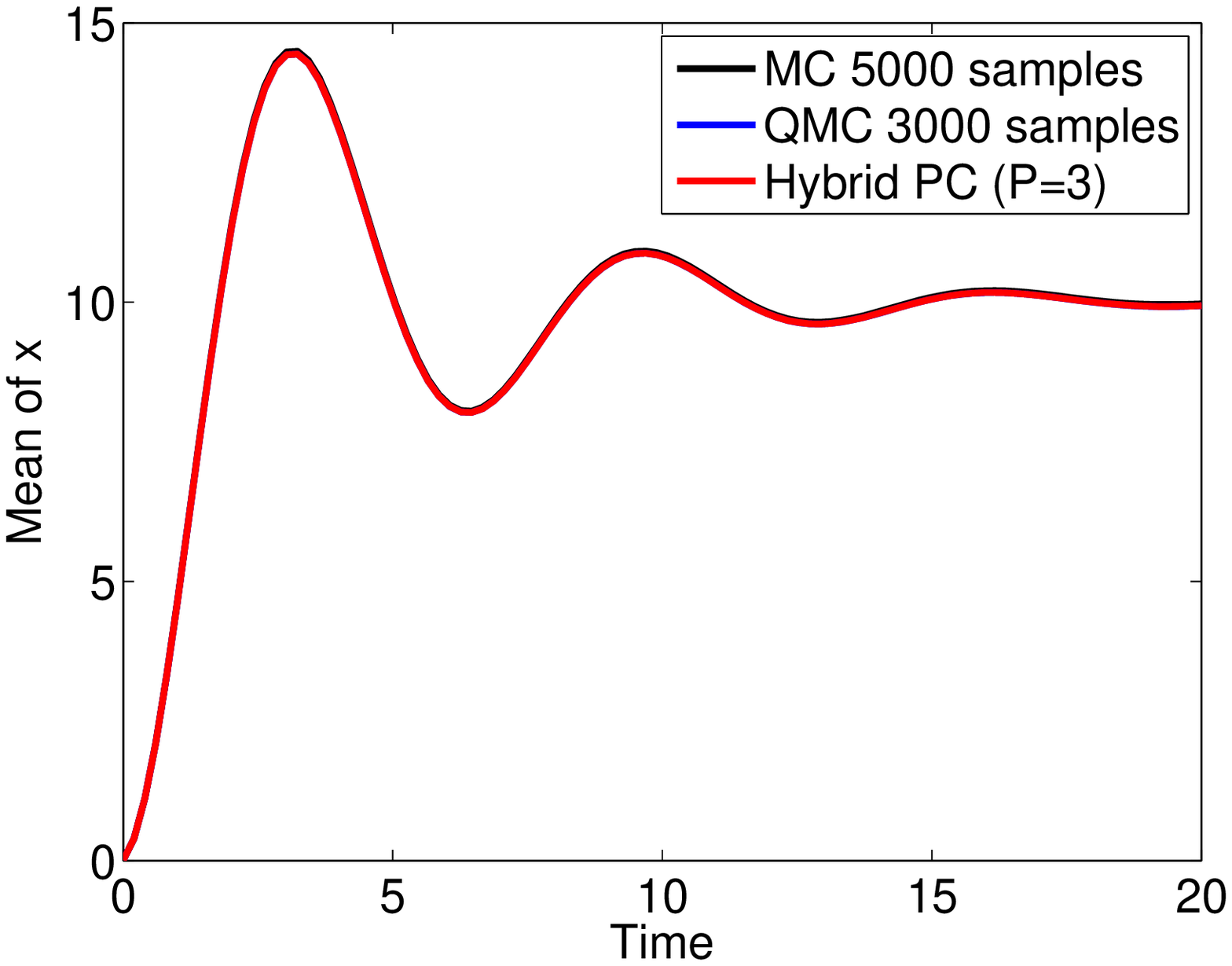}\label{Fig:Meanxcase1}}
  \subfigure[]{\includegraphics[scale=0.3]{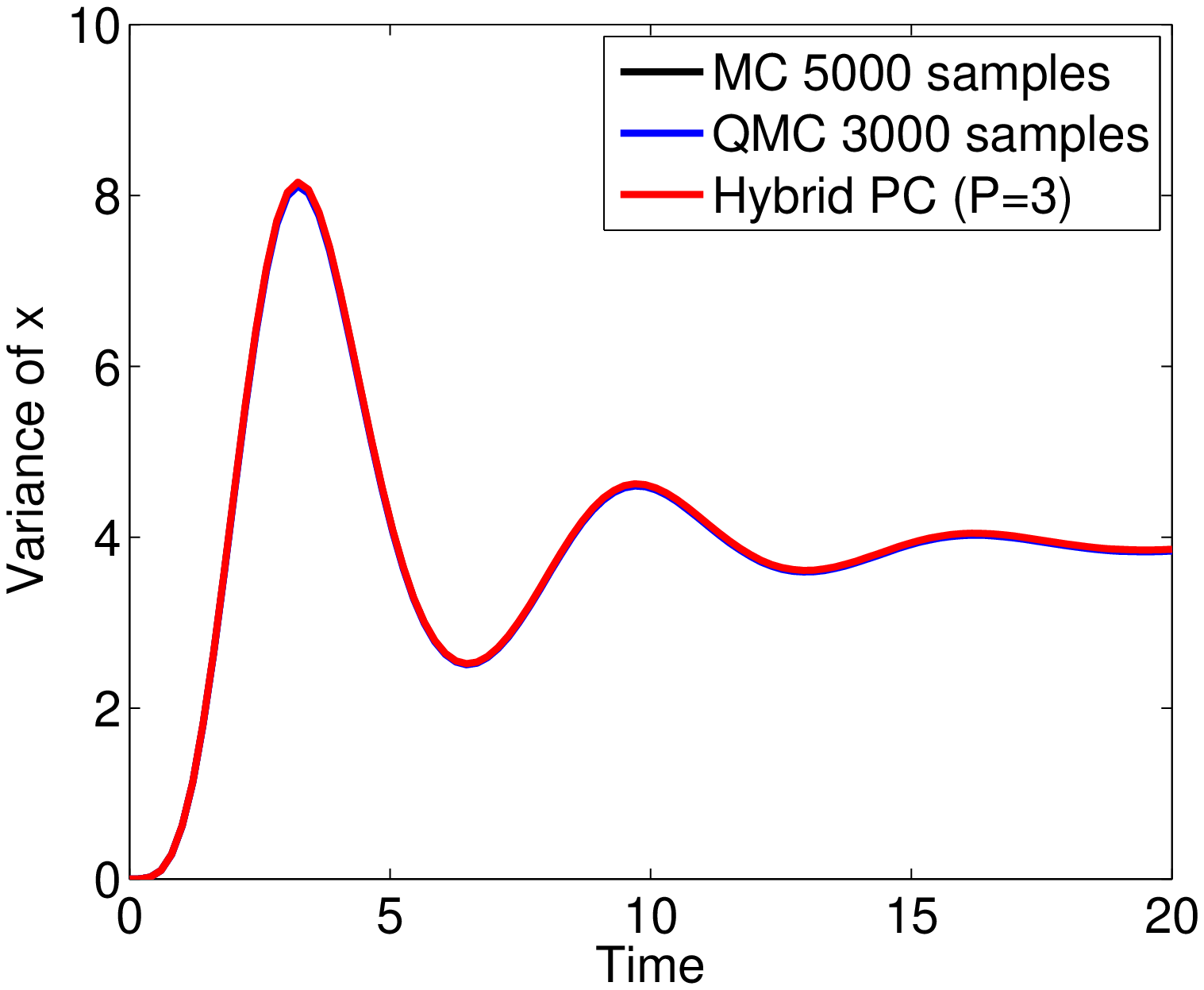}\label{Fig:Varxcase1}}
  \caption{ Comparison of a) predicted mean and b) predicted variance of $x(t;\lambda)$ for various UQ methods for case $1$.}
 \end{figure}

\subsubsection{Case 2: $\mu(\lambda)=10$, $\sigma(\lambda) = 2$}

The case of $\mu(\lambda)=10$ and $\sigma(\lambda)=2$ is
significantly more challenging. A representative trajectory of
the system (for $\lambda = 10$) is shown in
Fig.~\ref{Fig:Trajcase2}. When $\lambda > 0 $ the system
switches back and forth between modes. The reason for this is that when the system is in the right half-plane, the equilibrium
of the system is in the left half-plane and vice versa. A
histogram for $x(3.0;\lambda)$ is depicted in
Fig.~\ref{Fig:Histcase2}.

We again compare hybrid Wiener-Haar polynomial chaos to Monte Carlo sampling in
Fig.~\ref{Fig:case2}. We find that hybrid Wiener-Haar polynomial chaos
($p=5$) accurately computes the mean and the variance of the distribution of $x$. The maximum absolute error of hybrid polynomial chaos in mean is $\mu(x(t;\lambda) = 1.8\times10^{-3}$ and variance is $7\times10^{-5}$. Note that expansions
in terms of standard basis functions such as Hermite and Legendre polynomials
are unable to compute the moments of $x(t;\lambda)$ beyond a
threshold time that depends weakly on the order of
expansion~$p$ (see Fig.~\ref{Fig:failcase2}). The solution in this case is particularly challenging because it becomes more oscillatory in
terms of $\lambda$ at $t$ increases (Fig.~\ref{Fig:Oscfail}). The Wiener-Haar basis functions are naturally oscillatory and hence more accurate than Hermite polynomials in capturing the solution $x(t;\lambda).$ Note that, for large time simulations the Wiener-Haar
expansions will also fail since the solution will eventually become too oscillatory for the order of expansion. This problem is well known in the polynomial chaos literature~\cite{Longterm}.

\begin{figure}
  \centering
  \subfigure[]{\includegraphics[scale=0.4]{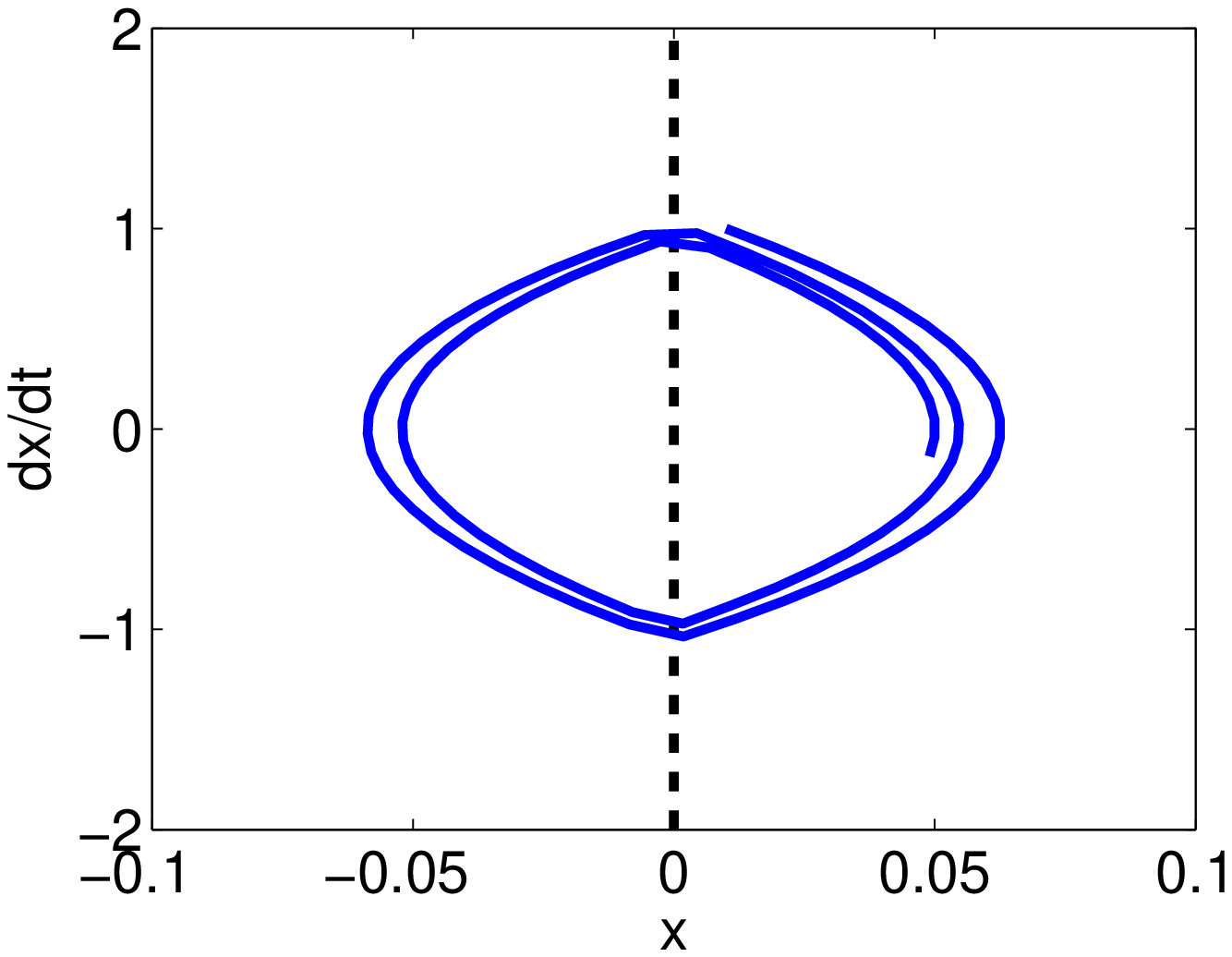}\label{Fig:Trajcase2}}
  \subfigure[]{\includegraphics[scale=0.4]{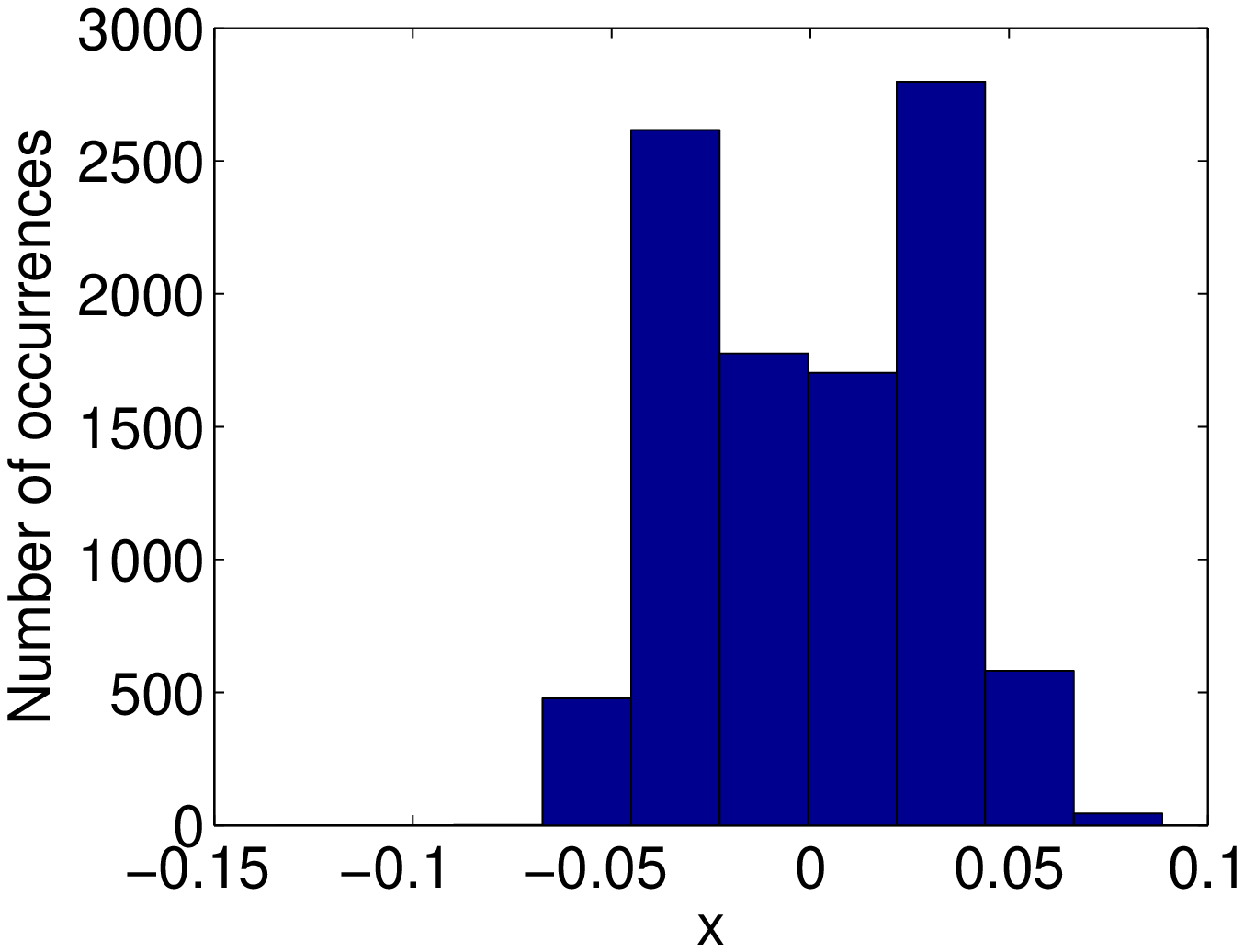}\label{Fig:Histcase2}}
  \caption{a) A representative trajectory for case $2$. b) Histogram of $x(3.0;\lambda)$ for case~2.\label{Fig:Dycase2}}
 \end{figure}

\begin{figure}
  \subfigure[]{\includegraphics[scale=0.4]{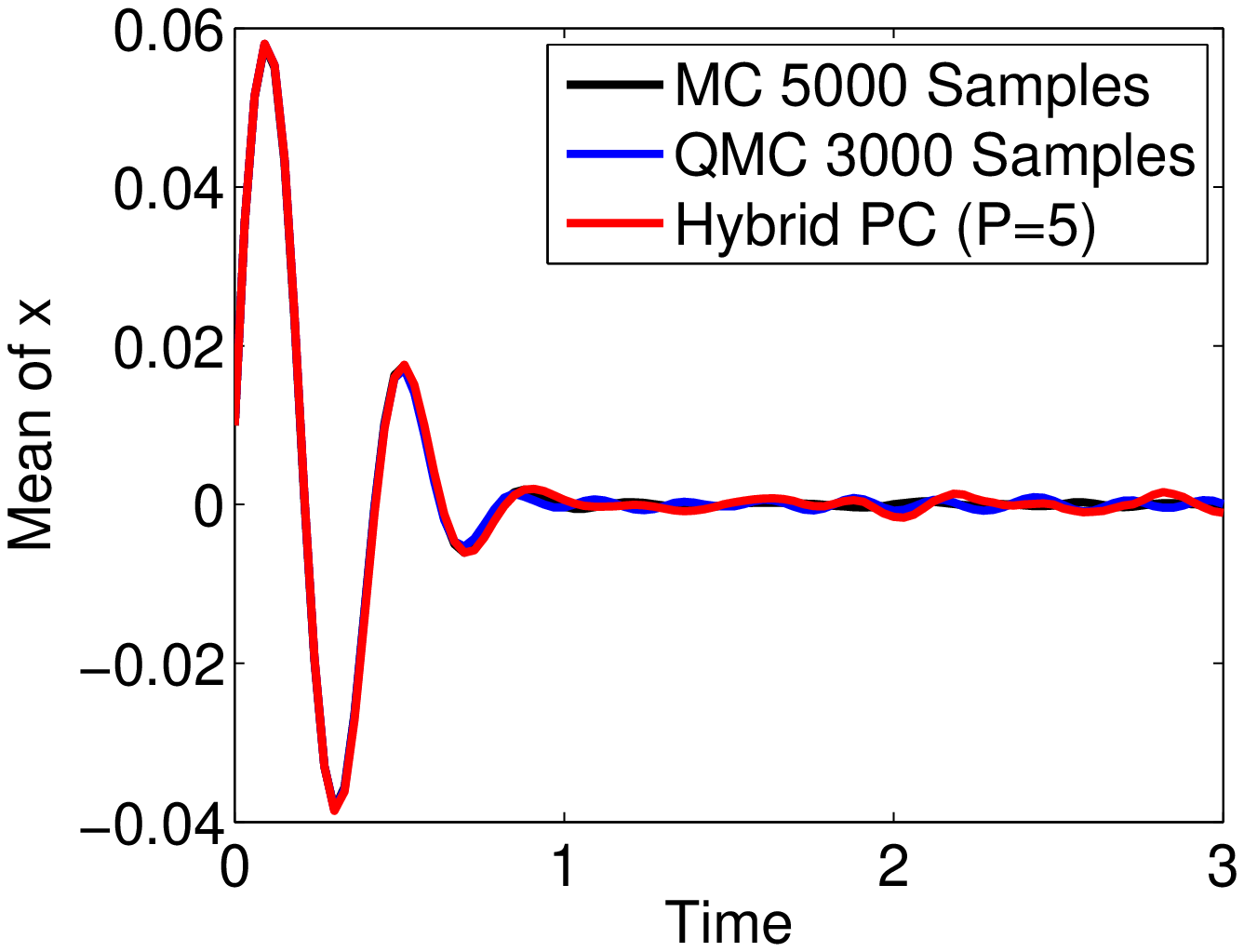}}
  \subfigure[]{\includegraphics[scale=0.4]{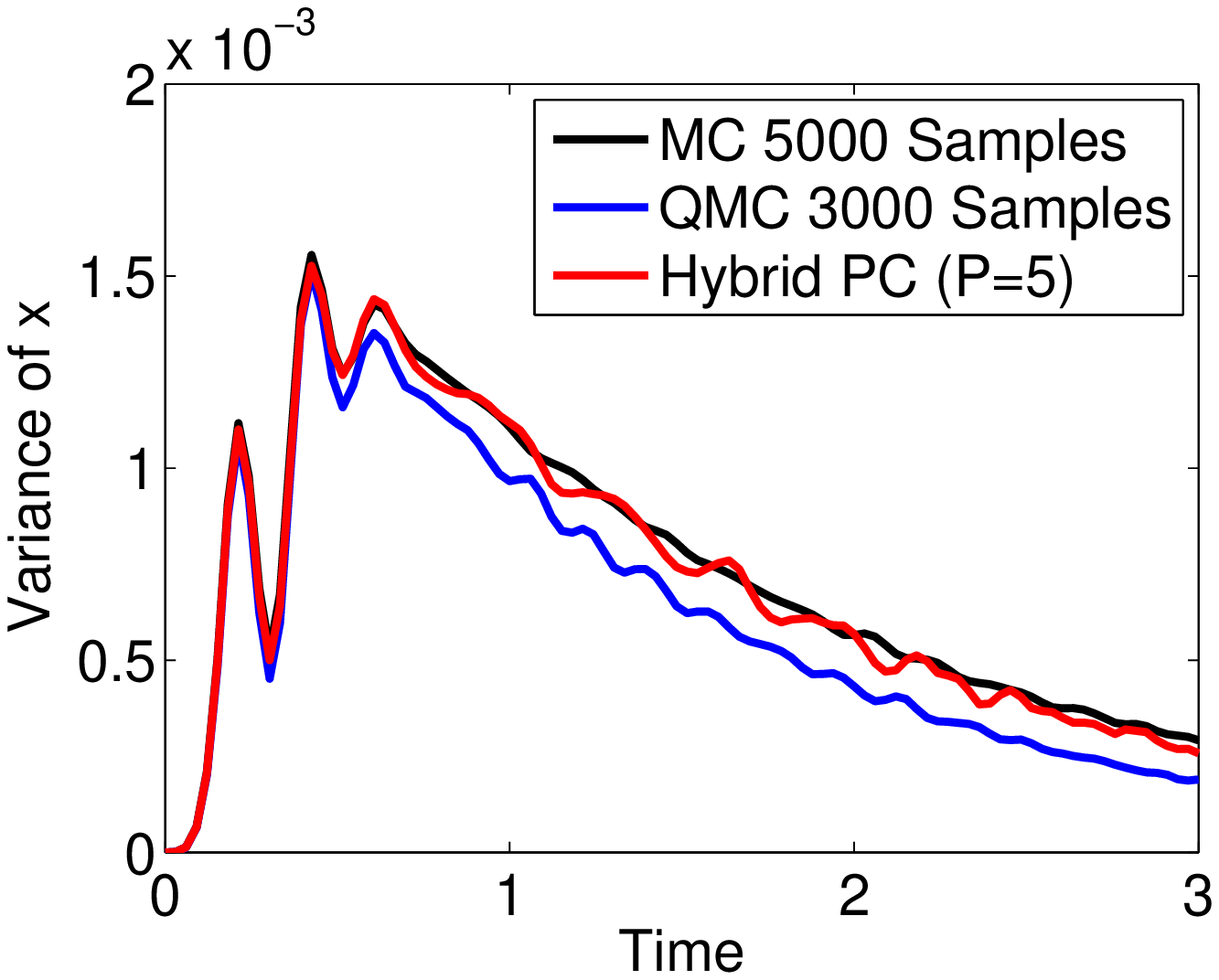}\label{Fig:case2var}}
  \caption{Comparison of a) predicted mean and b) predicted variance of $x(t;\lambda)$ by various UQ methods for case $2$. \label{Fig:case2}}
\end{figure}


\begin{figure}
  \centering
  \subfigure[]{\includegraphics[scale=0.3]{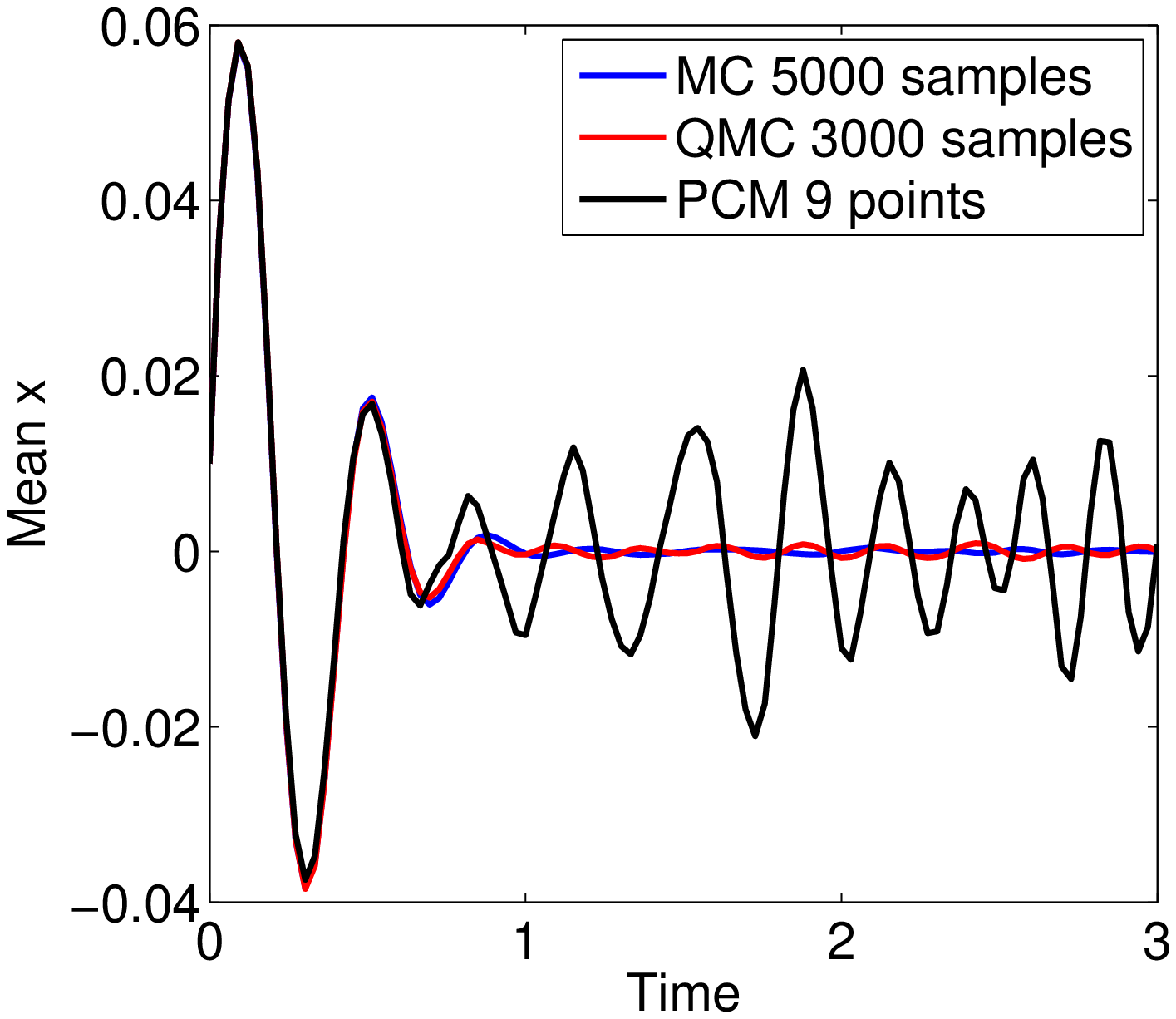}\label{Fig:failcase2mean}}
  \subfigure[]{\includegraphics[scale=0.3]{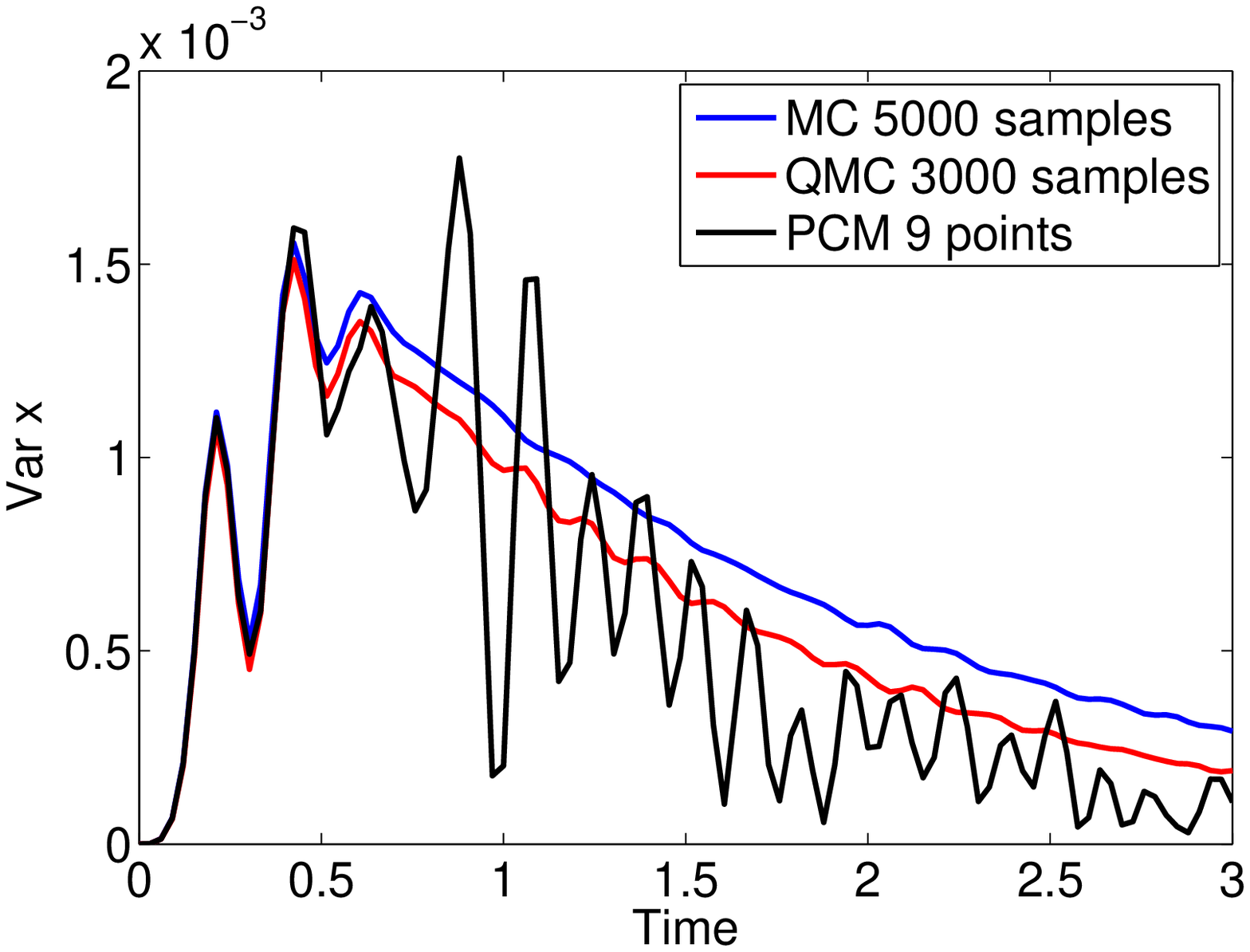}\label{Fig:failcase2var}}
  \caption{a) Mean and b) Variance predicted by standard (Hermite) polynomial chaos basis functions for case $2$.\label{Fig:failcase2}}
 \end{figure}
%

%

\begin{figure}
  \centering
  \includegraphics[width=0.5\hsize]{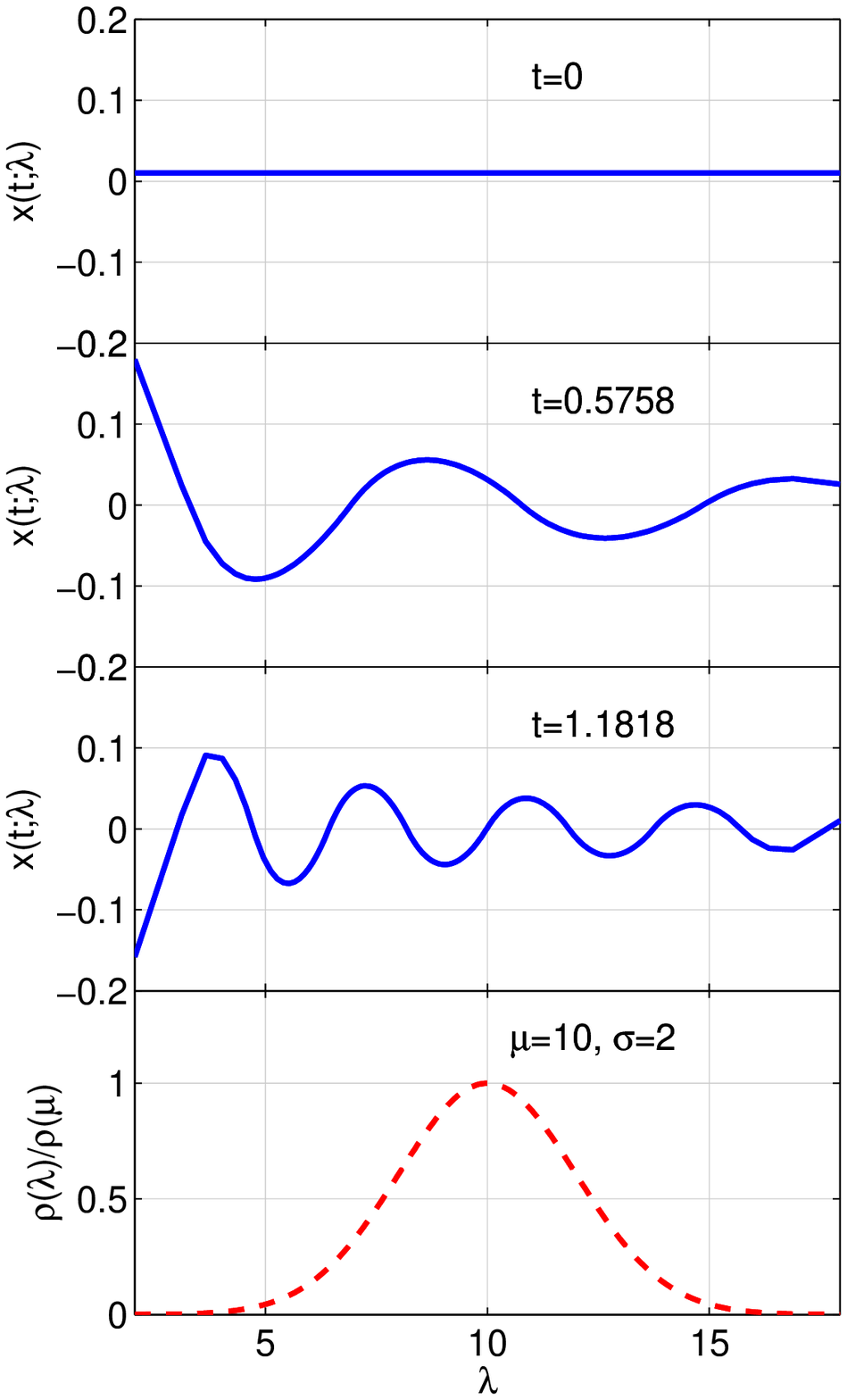}
    \caption{Case~2: $x(t;\lambda)$ becomes more oscillatory in $\lambda$ as $t$ increases.
    The bottom plot shows the distribution for~$\lambda$. \label{Fig:Oscfail}}
\end{figure}


%

\subsubsection{Case 3: $\mu(\lambda)=0$, $\sigma(\lambda) = 1.0$}

We also consider the case of $\mu(\lambda)=0$ with $\sigma(\lambda)=1.0$. This case is particularly challenging because there is a concentration of probability of $x(t;\lambda)$ as shown in Fig.~\ref{Fig:Histcase3}. The reason for this is as follows: when $\mu(\lambda)=0$, the nominal trajectory converges to $0$ and so do all trajectories with $\lambda>0$. Indeed, for trajectories with $\lambda>0$ the equilibrium lies in the opposite half-plane with respect to the current state. This gives rise to decaying switching trajectories, as case 2 in Fig.~\ref{Fig:Trajcase2}. Note that for $\lambda<0$, the trajectories are similar to the ones in case 1 (Fig.~\ref{Fig:Trajcase1}).

The Wiener-Haar basis functions along with the hybrid polynomial chaos approach accurately capture the moments of the distribution for $x(t;\lambda)$ (see Fig.~\ref{Fig:case3}). In fact, an expansion to just $P=3$ captures the first two moments. The step-function nature of the Wiener-Haar basis allows it to perform well in this scenario. Standard basis functions like Hermite polynomials are completely incapable of accurately capturing the moments of the distribution for $x(t;\lambda)$ shown in Fig.~\ref{Fig:Histcase2}.

\begin{figure}
  \centering
  \subfigure[]{\includegraphics[scale=0.4]{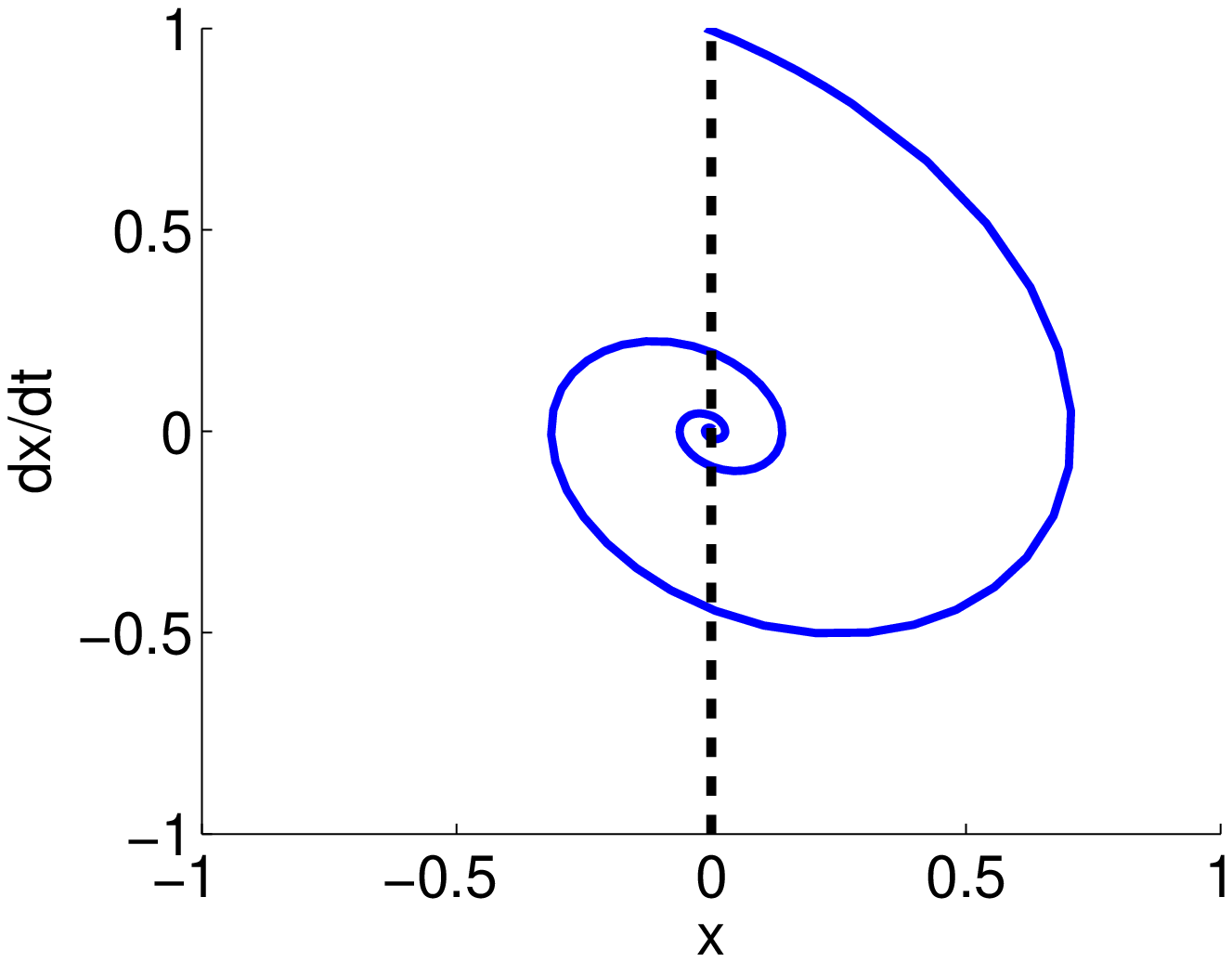}\label{Fig:Trajcase3}}
  \subfigure[]{\includegraphics[scale=0.4]{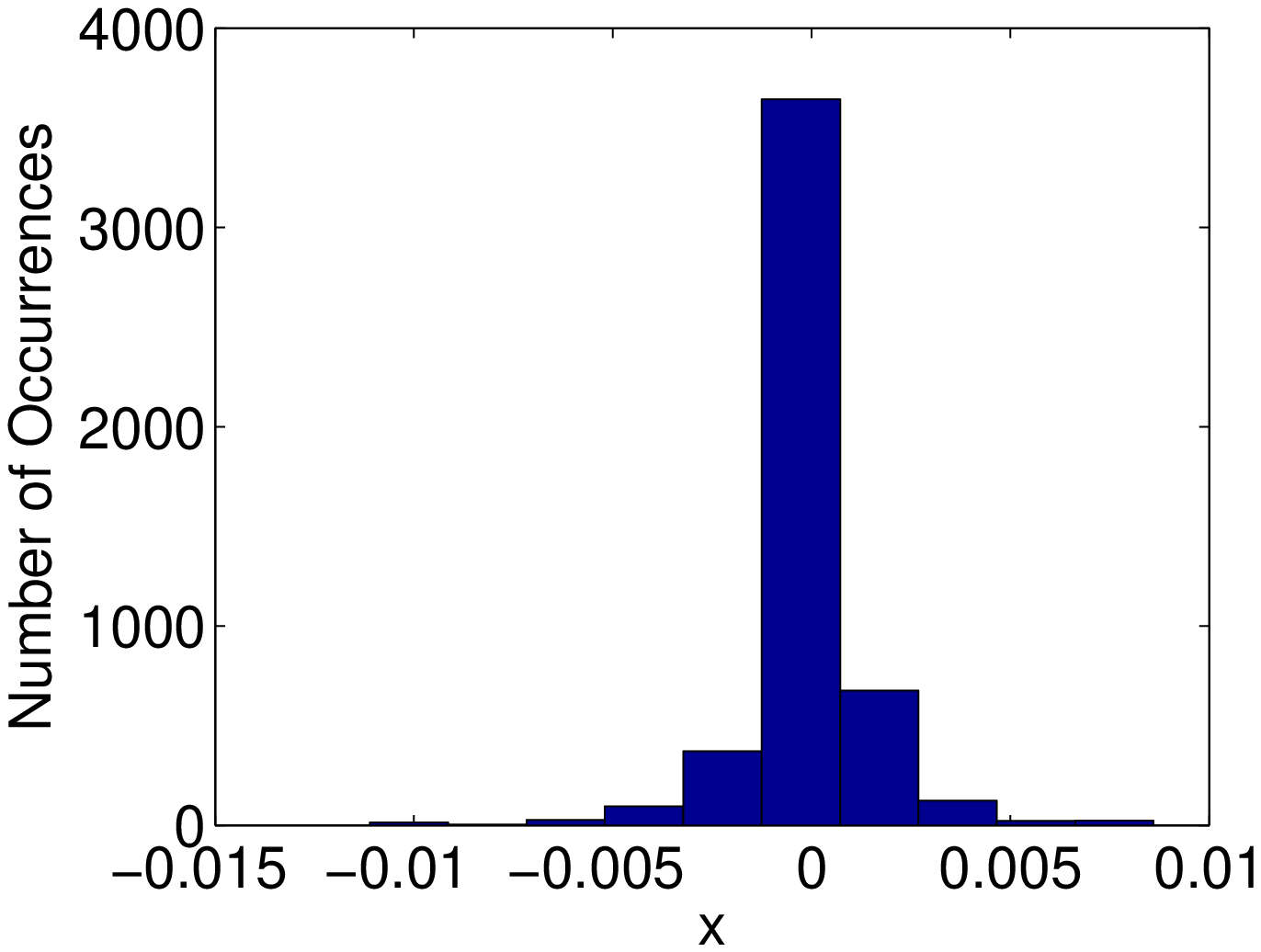}\label{Fig:Histcase3}}
  \caption{a) Nominal trajectory for case~3. b) Histogram of $x(20;\lambda)$ for case~3.}
 \end{figure}

\begin{figure}
  \subfigure[]{\includegraphics[scale=0.4]{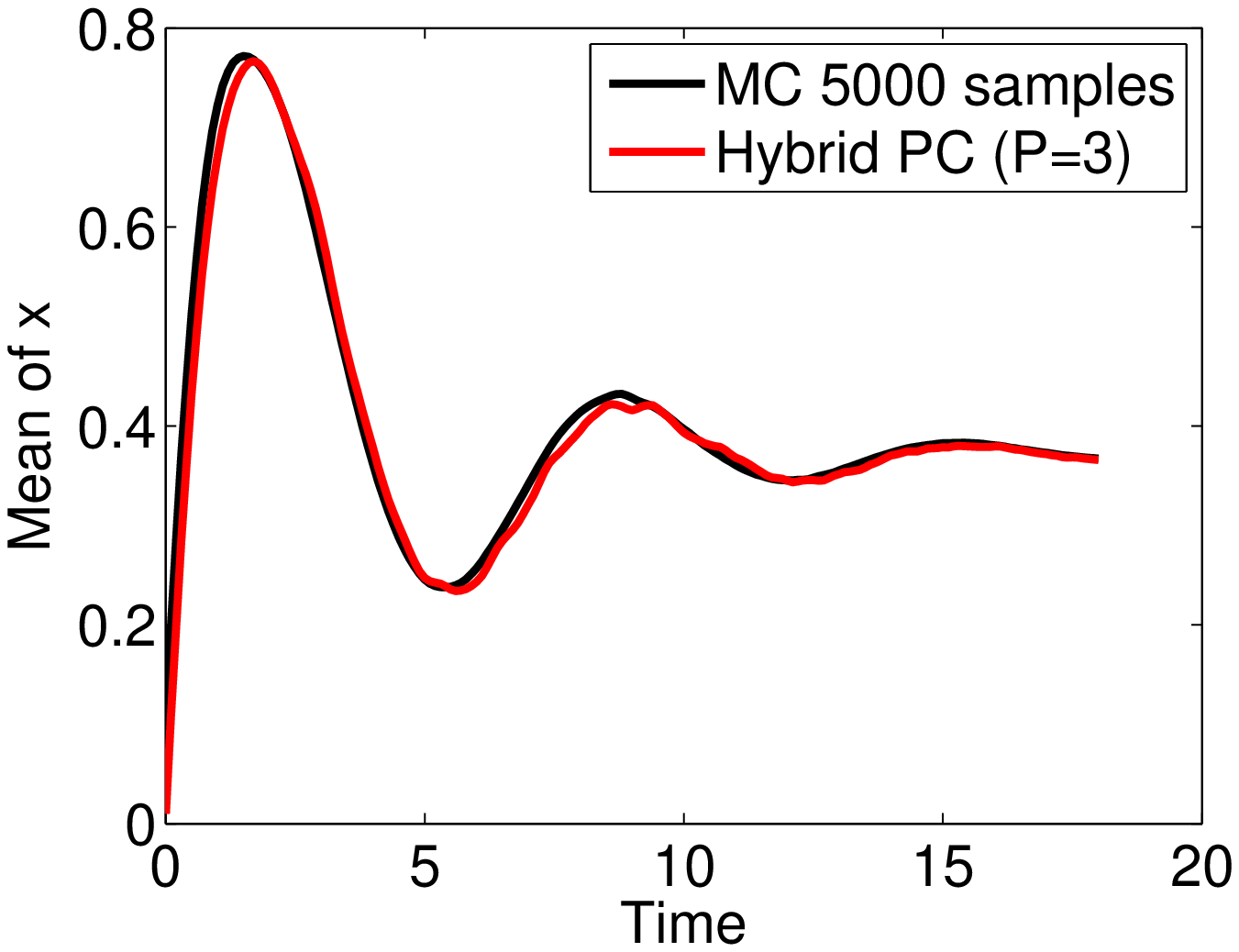}}
  \subfigure[]{\includegraphics[scale=0.4]{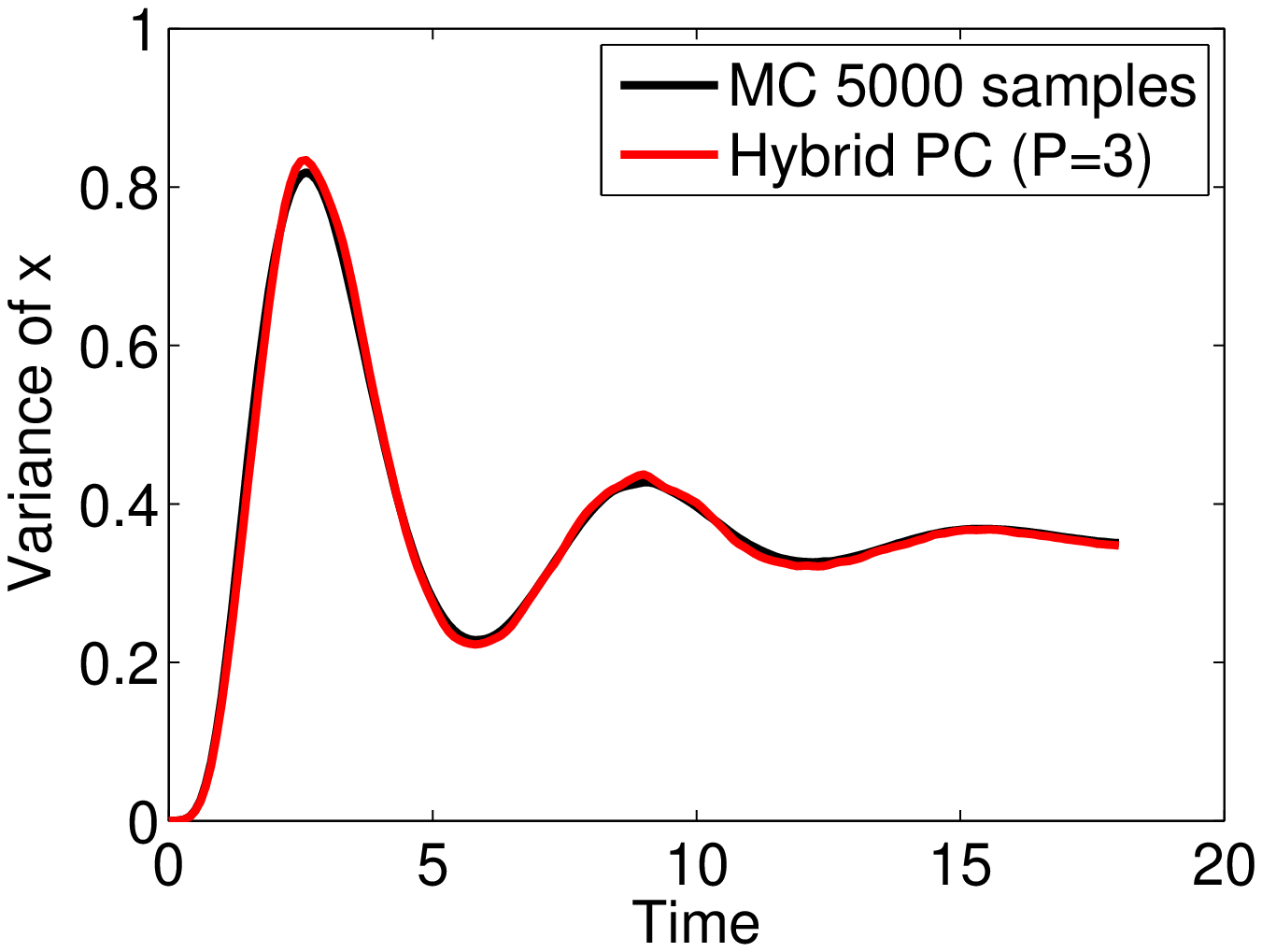}}
  \caption{Comparison of a) predicted mean and b) predicted variance of $x(t;\lambda)$ for various UQ methods for case $3$.\label{Fig:case3}}
\end{figure}

\section{Transport theory approach for uncertainty quantification in hybrid systems}
\label{sec:transporttheory}

In this section we present a qualitatively different approach to UQ in hybrid systems based on transport equations. We write an advection equation for the probability density of the state and expand this equation in an appropriate basis, as is done in polynomial chaos. The resulting equation is equivalent to the Fokker-Planck equation~\cite{Cit:FPE:book} in the absence of a diffusion term. Though significant effort has been put into computing solutions for the Fokker-Planck equation in various applications~\cite{Cit:FPE:book,Cit:FPE:book2}, our setting is particularly challenging due to the switching dynamics of hybrid systems. We note that advection equations for probability distribution functions have been used to propagate uncertainty through heterogeneous porous media with uncertain properties~\cite{Tartakovsky2011} and for hyperbolic conservation laws with noise~\cite{Luo2006}. Recently, similar methods have been extended to cumulative distribution functions in hyperbolic conservation laws~\cite{Wang2012}.

The polynomial chaos expansion in this setting yields a system of hyperbolic partial differential equations for the coefficients of the expansion, which are then solved by integrating along characteristics. The hybrid nature of the original system is reflected in that the characteristics exhibit switching. Even though we only consider systems without resets, we can use the results of Sec.~\ref{Sec:resets} to treat systems with resets.

As in Sec.~\ref{Sec:hyb_poly}, let us consider a hybrid system without resets and uncertain parameters~$\lambda$ with guard conditions independent
of~$\lambda$:\footnote{Note that this embodies the constraint of having no overlap in the domains for different modes.}
\[
\dot{x} = f_i(x,\lambda) \quad \text{when $G_i(x)$ is true.}
\]
The system has uncertain initial conditions described by the
probability density~$\rho_{x0} (x)$ and the uncertain parameters
$\lambda$ follow~$\rho_\lambda (\lambda)$.

We describe the system by the time evolution of the distribution
function $\rho(x,\lambda;t)$, which has initial condition
\[
\rho(x,\lambda;0) = \rho_{x0}(x) \rho_\lambda(\lambda) \quad
\text{(initial uncertainties are independent)}
\]
and normalization
\[
\int \rho(x,\lambda;t) dx d\lambda = 1.
\]
Note that, for all time, we have
\begin{equation}
\rho_\lambda(\lambda) = \int \rho(x,\lambda;t) dx.
\label{rho_lambda_is_constant}
\end{equation}

Our goal is to compute the evolution of the density in~$x$ (the
marginal distribution):
\[
\rho_x(x,t) = \int \rho(x,\lambda;t) d\lambda.
\]
However, without introducing assumptions on $\rho_\lambda$, the equation for $\rho_x$ is not closed. We therefore focus on computing the evolution of $\rho$ directly through an expansion. From this evolution, $\rho_x$ can then be calculated at every instant.

\subsection{Equation for $\rho$}

Let us define the sets $S_i = \left\{ x \; : \; G_i(x) \;
\text{is true} \right\}$ and the indicator functions
\[
\mathbf{1}_i(x) = \begin{cases} 1 & \text{if $x \in S_i$} \\ 0 &
\text{if not}. \end{cases}
\]
With this notation, and because $\lambda$ is constant along a
trajectory, we have
\begin{equation}
\frac{\partial \rho}{\partial t} + \nabla \cdot (\rho f) = 0
\qquad \forall \lambda \label{liouville}
\end{equation}
where $f(x,\lambda) = \sum_i \mathbf{1}_i(x) f_i(x,\lambda)$ and the gradient operator acts only on~$x$ and not on~$\lambda$.

\subsection{Boundary conditions at the interfaces}

The discontinuity in the equation implies that mass may
accumulate at the boundaries between zones where different guard
conditions are valid. Integrating on a cylinder that
crosses one such boundary we obtain the matching condition
\begin{equation}
\frac{\partial \sigma}{\partial t} + \nabla_s \cdot (\sigma f) =
\rho_i f_i \cdot \hat{n}_{ik} - \rho_k f_k \cdot \hat{n}_{ik},
\label{boundary_condition}
\end{equation}
where $\sigma$ is a surface probability density between regions
$S_i$ and~$S_k$, $\hat{n}_{ik}$ is the surface normal from $S_i$
to~$S_k$, $\nabla_s$ is the divergence in the space tangent to
the surface, and $f$ is the flow \emph{at the
surface}.\footnote{Whether $f$ is $f_i$ or $f_k$ on the surface
will depend on how the guard conditions are expressed.} This may
lead to a cascade, with probability condensing into
progressively lower dimensional structures: where two
hypersurfaces meet (the boundary between three guard conditions)
the same scenario repeats, until we have mass accumulating at
points. To solve Eqn.~\ref{liouville} the initial condition must
include initial values for~$\sigma$ and the
probability density on any lower dimensional structure where
mass may accumulate.

The discontinuity of $f$ does not \emph{necessarily} imply
accumulation. In fact, at an interface we have several options:
\begin{enumerate}
    \item Both $f_i \cdot \hat{n}_{ik}$ and $f_k \cdot \hat{n}_{ik}$ are nonzero and have the same sign. In this case, there is no accumulation. If we assume that $\rho$ has a singularity at the interface, the flow will move the singularity away from it.

    \item $f_i \cdot \hat{n}_{ik} > 0$ and $f_k \cdot \hat{n}_{ik} \leq 0$: accumulation occurs.

    \item $f_i \cdot \hat{n}_{ik} \geq 0$ and $f_k \cdot \hat{n}_{ik} < 0$: accumulation occurs.

    \item $f_i \cdot \hat{n}_{ik} = f_k \cdot \hat{n}_{ik} = 0$: no accumulation.

    \item $f_i \cdot \hat{n}_{ik} \leq 0$ and $f_k \cdot \hat{n}_{ik} \geq 0$: no accumulation.
\end{enumerate}
If we are in the case without accumulation and without initial concentration of density in lower dimensional structures, then $\sigma=0$ and
Eqn.~\ref{boundary_condition} becomes
\begin{equation}
\rho_i f_i \cdot \hat{n}_{ik} = \rho_k f_k \cdot \hat{n}_{ik}.
\label{flow_matching}
\end{equation}

\begin{thm}
\label{thm:noaccumulation}
Any second order ODE of the form
\[
\ddot{X} = F_i(X,\dot{X},\lambda) \quad \text{when $G_i(X)$ is
true}
\]
satisfies the conditions for no accumulation at the interface
where the ODE is discontinuous.
\end{thm}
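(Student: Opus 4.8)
The plan is to pass to the first-order (phase-space) form of the system and observe that the discontinuity of the vector field across a switching interface is entirely tangential to that interface, so its normal component is continuous. Concretely, set $x = (X, V)$ with $V = \dot{X}$; in mode $i$ the dynamics become $\dot{X} = V$, $\dot{V} = F_i(X,V,\lambda)$, i.e. $f_i(x,\lambda) = (V,\, F_i(X,V,\lambda))$. The structural point to extract is that the mode index $i$ appears only in the second block: the first block $\dot{X} = V$ is identical in every mode.

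Next I would describe the interface. Since each guard $G_i$ is a condition on the configuration variable $X$ alone, the boundary between two adjacent regions $S_i$ and $S_k$ is locally the zero level set of a scalar function $\phi(X)$ of $X$ only. Regarded inside the phase space, this surface is $\{(X,V) : \phi(X)=0\}$, whose unit normal has no component along the $V$-directions; up to orientation,
\[
\hat{n}_{ik} = \frac{1}{|\nabla_X \phi|}\bigl(\nabla_X \phi,\ 0\bigr).
\]
Therefore the normal velocity at the interface is
\[
f_i \cdot \hat{n}_{ik} = \frac{\nabla_X \phi \cdot V}{|\nabla_X \phi|},
\]
whose right-hand side is independent of the mode index. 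Hence $f_i \cdot \hat{n}_{ik} = f_k \cdot \hat{n}_{ik}$ at every point of the interface.

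Finally I would match this to the case list preceding the theorem. When $\nabla_X \phi \cdot V \neq 0$, the two normal components are nonzero and equal, hence of the same sign: this is case~1 and gives no accumulation. When $\nabla_X \phi \cdot V = 0$ (the trajectory instantaneously tangent to the interface) both normal components vanish: this is case~4 and again gives no accumulation. So one of the non-accumulating alternatives always holds, the surface density $\sigma$ stays zero when it starts at zero, and the matching reduces to Eqn.~\ref{flow_matching}. The same argument works verbatim when $X$ is vector-valued, since $\dot{X} = V$ still holds componentwise in every mode and $\hat{n}_{ik}$ still annihilates the $V$-block.

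I do not expect a genuine obstacle here: the entire content is the observation that $\dot{X} = V$ is mode-independent and that the interface normal lives in the $X$-subspace. The only points requiring a little care are the orientation convention for $\hat{n}_{ik}$ (immaterial, since both sides of the matching condition transform the same way) and the degenerate tangency locus $\{\nabla_X \phi \cdot V = 0\}$ inside the interface, which I would simply fold into case~4.
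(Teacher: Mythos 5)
Your proposal is correct and follows essentially the same route as the paper's own proof: pass to the phase-space form, note that the guard depends only on $X$ so the interface normal has no component in the velocity directions, and conclude that $f\cdot\hat{n}_{ik}$ equals the mode-independent quantity $\nabla_X\phi\cdot V/|\nabla_X\phi|$ and is therefore continuous across the interface. Your explicit matching to cases~1 and~4 of the accumulation list is a small elaboration the paper leaves implicit, but the argument is the same.
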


\begin{proof}

Without loss of generality we can focus on just two regions
$S_i$ and~$S_k$ and rewrite the problem as the first-order ODE
\[
\dot{x} \equiv \left(\begin{matrix} \dot{X} \\ \dot{Y}
\end{matrix} \right) = f(x) = \left(\begin{matrix} Y \\
F_i(X,Y,\lambda) \end{matrix} \right) \quad \text{when $G_i(X)$
is true.}
\]
To find the normal $\hat{n}_{ik}$ we consider a ${\cal C}^1$
function $b(x) = b(X,Y) = B(X)$ that is positive in $S_k$ and
negative $S_i$ so that the interface is given by the locus of
$b(x)=0$. The gradient of this function is proportional to the
normal:
\[
\hat{n}_{ik} \propto \nabla b = \left(\begin{matrix} \nabla_X B
\\ 0 \end{matrix} \right)
\]
and therefore
\[
\hat{n}_{ik} \cdot f = \frac{Y \cdot \nabla_X B}{||\nabla_X
B||},
\]
which is continuous at the interface.

\end{proof}

\subsection{Expansion of the equation for $\rho$}

At every point $x$ we expand the distribution in $\lambda$:
\begin{equation}
\rho(x,\lambda;t) = \sum_k  a_k(x,t) w(\lambda) \psi_k(\lambda),
\label{expansion}
\end{equation}
where $\{\psi_k\}$ forms an orthogonal basis with respect to
$w$:
\[
\int \psi_i(\lambda) \psi_k(\lambda) w(\lambda) d\lambda = w_k
\delta_{ik}.
\]
We keep $w_k$ to allow for a non-normalized weight function
$w(\lambda)$. We replace the expansion in Eqn.~\ref{expansion}
into Eqn.~\ref{liouville} and project onto $\psi_i$ to obtain a set of partial differential equations for the coefficients $a_i(x,t)$:
\begin{equation}
\frac{\partial a_i(x,t)}{\partial t} + \frac{1}{w_i}\nabla \cdot
\sum_k a_k(x,t) \int w(\lambda) \psi_i(\lambda) \psi_k(\lambda)
f(x,\lambda) d\lambda = 0. \label{liouville_expanded}
\end{equation}
Since this equation is local in~$x$ there is no question as to
which $f(x,\lambda)$ must be used at any given point.

\subsection{Example: switching oscillator}

Here we revisit the switching oscillator system
\[
\ddot{x} = \begin{cases} -x - \gamma \dot{x} - \lambda & \text{if $x \geq 0$} \\
-x - \gamma \dot{x} + \lambda & \text{otherwise,} \end{cases}
\]
which can be expressed as the 2--D system
\[
\left(\begin{matrix} \dot{x} \\ \dot{y} \end{matrix}\right) =
\left( \begin{matrix} f_x \\ f_y  \end{matrix} \right)
\]
where $f_x = y$ and
\[
f_y =
\begin{cases} -x - \gamma y - \lambda & \text{if $x \geq 0$} \\
-x - \gamma y + \lambda & \text{otherwise}.
\end{cases}
\]
The transition points are located at $x=0$ and therefore
\[
f \cdot \hat{n} = f \cdot \left(\begin{matrix}1 \\ 0
\end{matrix} \right) = f_x = y
\]
which is continuous and therefore has the same sign on both
sides. Therefore there is no mass accumulation at the interface
for this system. This is a special case of theorem~\ref{thm:noaccumulation}.

\subsubsection{Case~3 revisited}

To connect with the example presented in case~3 ($\mu=0$, $\sigma=1$) we choose $w(\lambda)= e^{-\lambda^2/2}$ and $\psi_k$'s as the probabilist's Hermite polynomials $H_k$, with the following properties
\begin{align*}
\int H_i(\lambda) H_k (\lambda) w(\lambda) d\lambda &= k! \sqrt{2\pi} \delta_{ik} \\
H_{k+1}(\lambda) &= \lambda H_k(\lambda) - H_k'(\lambda) \\
H_k'(\lambda) &= k H_{k-1}(\lambda).
\end{align*}
Calculating the terms in Eqn.~\ref{liouville_expanded}:
\begin{align*}
\int w H_i H_k f_x d\lambda &= y \, i! \sqrt{2\pi} \delta_{ik} \\
\int w H_i H_k f_y d\lambda &= - (x+\gamma y) \, i! \sqrt{2\pi}
\delta_{ik} \mp i! \sqrt{2\pi} (\delta_{i,k+1} + k
\delta_{i,k-1}),
\end{align*}
where the upper sign ($-$) is for $x\ge 0$ and the lower sign
($+$) is for~$x < 0$. Substituting into Eqn.~\ref{liouville_expanded} we
obtain the equations
\begin{equation}
\begin{split}
0 &= \partial_t a_0 + \partial_x (y a_0) + \partial_y \left[-(x + \gamma y)a_0 \mp a_1 \right]  \\
0 &= \partial_t a_i + \partial_x (y a_i) + \partial_y \left[-(x
+ \gamma y)a_i \mp (i+1) a_{i+1} \mp a_{i-1} \right] \quad (i\ge
1).
\end{split}
\label{oscillator_hyperbolic_system}
\end{equation}
Note that, instead of using the Hermite polynomials, one can use the Haar wavelet expansion~\cite{LeMaitre2004} to represent the solution as was done in previous sections. We plan to present this calculation in future work.
\begin{thm}
\label{thm:oscillatorishyperbolic}
The system of PDEs for the switching oscillator is hyperbolic.
\end{thm}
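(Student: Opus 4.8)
The plan is to read the principal symbol off the system~(\ref{oscillator_hyperbolic_system}), exhibit an explicit symmetrizer built from the squared norms of the Hermite basis functions, and then invoke the standard fact that a first-order system possessing a positive-definite symmetrizer is (strongly) hyperbolic.

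First I would put~(\ref{oscillator_hyperbolic_system}) into quasilinear form. Truncating the expansion~(\ref{expansion}) at order~$p$ and setting $\mathbf{a}=(a_0,\dots,a_p)^T$, and using that $y$ is independent of~$x$ so that $\partial_x(y a_i)=y\,\partial_x a_i$, while $\partial_y[-(x+\gamma y)a_i]=-(x+\gamma y)\,\partial_y a_i-\gamma a_i$, the system takes the form
\begin{equation}
\partial_t \mathbf{a}+A_x\,\partial_x\mathbf{a}+A_y\,\partial_y\mathbf{a}-\gamma\,\mathbf{a}=0,
\label{eq:oscquasilinear}
\end{equation}
with $A_x=y\,\I$ and $A_y=-(x+\gamma y)\,\I\mp M$, where $M$ is the tridiagonal matrix with $M_{i,i+1}=i+1$, $M_{i,i-1}=1$ and all other entries zero, the sign $\mp$ being $-$ for $x\ge 0$ and $+$ for $x<0$. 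The key observation is that $M$ is exactly the matrix of ``multiplication by~$\lambda$'' in the probabilist's Hermite basis: the recurrence $H_{k+1}=\lambda H_k-H_k'$ together with $H_k'=kH_{k-1}$ gives $\lambda H_k=H_{k+1}+kH_{k-1}$, so the coefficient of $H_i$ in $\lambda H_k$ is $1$ when $i=k+1$ and $k$ when $i=k-1$, which is precisely this~$M$.

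Next I would write down the symmetrizer. Because the basis satisfies $\int H_iH_k\,w\,d\lambda=w_i\delta_{ik}$ with $w_i=i!\sqrt{2\pi}$, and multiplication by the real scalar~$\lambda$ is self-adjoint for this inner product, the matrix $W=\diag(w_0,\dots,w_p)$ satisfies $WM=M^TW$; this is checked in one line, since $(WM)_{i,i+1}=w_i(i+1)=(i+1)!\sqrt{2\pi}=w_{i+1}\cdot 1=(WM)_{i+1,i}$. Hence $WM$ is symmetric, and therefore so are $WA_x=y\,W$ and $WA_y=-(x+\gamma y)\,W\mp WM$. Since $W$ is positive definite and constant in $(x,y,t)$ (and the same on both regions, because $\pm WM$ are symmetric together), Eqn.~(\ref{eq:oscquasilinear}) is a symmetric (Friedrichs) hyperbolic system; equivalently, for every real $(\xi_x,\xi_y)$ the symbol $\xi_xA_x+\xi_yA_y=[\xi_x y-\xi_y(x+\gamma y)]\,\I\mp\xi_y M$ is conjugated by $W^{1/2}$ to a real symmetric matrix, hence has real eigenvalues and a complete eigenbasis. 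The zeroth-order term $-\gamma\mathbf{a}$ does not affect the type. This establishes the theorem. (As a bonus, $W^{1/2}MW^{-1/2}$ is the symmetric Jacobi matrix of the Hermite polynomials, with off-diagonal entries $\sqrt{i+1}$ and spectrum equal to the $p+1$ distinct zeros of $H_{p+1}$, so the system is in fact strictly hyperbolic along~$y$.)

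I do not expect a substantive obstacle here: the only real work is the bookkeeping in deriving~(\ref{eq:oscquasilinear}) — in particular, correctly separating the genuinely first-order part of $\partial_y[-(x+\gamma y)a_i]$ from the zeroth-order $-\gamma a_i$ term, and handling both spatial directions — after which recognizing the Hermite three-term recurrence in~$M$ makes the symmetrizer $W=\diag(w_i)$ essentially forced and the verification that $WM$ is symmetric immediate. The one point requiring care is getting all the signs and indices right.
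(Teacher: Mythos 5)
Your proof is correct and takes essentially the same route as the paper: both reduce hyperbolicity to showing that the tridiagonal symbol is diagonally conjugate to a real symmetric matrix, and your symmetrizer $W=\diag(i!\sqrt{2\pi})$ gives exactly the paper's similarity transformation $D=\diag(\sqrt{i!})$ since $W^{1/2}$ equals $D$ up to a harmless scalar factor. The Friedrichs-symmetrizer framing and the observation that $M$ is the self-adjoint ``multiplication by $\lambda$'' operator in the Hermite basis are a nice conceptual gloss (and your closing remark anticipates the paper's Theorem~\ref{thm:hermite_zeros}), but the underlying computation is identical.
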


\begin{proof}

Consider a system of PDEs of the form
\[
\partial_t u + \sum_\nu A_\nu \partial_\nu u = B,
\]
where $u(x_1, \ldots, x_n, t) \in \mathbb{R}^m$ and the $A_\nu$
are $m\times m$ matrices. The system is hyperbolic if for any
$\alpha_\nu \in \mathbb{R}$ the linear combination $A = \sum_\nu
\alpha_\nu A_\nu$ has real eigenvalues.

For the switching oscillator the system of PDEs can be written
as
\begin{multline*}
\partial_t \begin{pmatrix} a_0 \\ a_1 \\ \vdots \end{pmatrix}
+ \begin{pmatrix} y & 0 & \hdotsfor{2} \\ 0 & y &  &  \\ \vdots
& & \ddots & \end{pmatrix}
\partial_x \begin{pmatrix} a_0 \\ a_1 \\ \vdots \end{pmatrix} \\
+ \begin{pmatrix}
\beta & \mp 1 & 0 & \hdotsfor{2} \\
\mp 1 & \beta & \mp 2 \\
0 & \mp 1 & \beta & \mp 3 \\
\vdots & & \ddots & \ddots & \ddots
\end{pmatrix}
\partial_y \begin{pmatrix} a_0 \\ a_1 \\ \vdots \end{pmatrix}
= \gamma \begin{pmatrix} a_0 \\ a_1 \\ \vdots \end{pmatrix},
\end{multline*}
where $\beta = - (x+ \gamma y)$. Thus, any combination of the
matrices $A_\nu$ is going to be of the tridiagonal form
\[
A = \begin{pmatrix}
a & b & 0 & \ldots & \text{\huge{0}} \\
b & a & 2b & 0 & \ldots\\
0  & b & a & 3b \\
\vdots  & 0   & b &  a \\
\text{\huge{0}}  & \ldots  &   &    & \ddots
\end{pmatrix}.
\]
This tridiagonal non-symmetric matrix is similar to a
tridiagonal symmetric matrix with a diagonal similarity matrix:
$S = D A D^{-1}$, where
\[
D = \begin{pmatrix} \sqrt{0!} &   \\  & \sqrt{1!} & & & \text{\huge{0}}  \\  & &
\sqrt{2!} \\ \text{\huge{0}}& & & \sqrt{3!} \\ & & & & \ddots
\end{pmatrix}.
\]
$A$ is similar to $S$, a symmetric and real matrix, and
therefore $A$~has real eigenvalues.

\end{proof}

The issue of hyperbolicity is discussed in more depth in~\cite{Tryoen2010a}. To solve the hyperbolic system from Eqn.~\ref{oscillator_hyperbolic_system} we write it in the form
\[
\partial_t a + y \partial_x a - (x + \gamma y) \partial_y a \mp A \partial_y a = \gamma a,
\]
where $A$ is the tridiagonal matrix
\begin{equation}
A =
\begin{pmatrix}
0 & 1 \\
1 & 0 & 2 & & \text{\huge{0}}\\
  & 1 & 0 & 3 \\
 \text{\huge{0}} &   & 1 & 0 \\
  &   &   &   & \ddots
\end{pmatrix}.
\label{hermite_matrix}
\end{equation}
We diagonalize $A = P \Lambda P^{-1}$ and define $b = P^{-1} a$
to obtain the set of uncoupled hyperbolic PDEs
\begin{equation}
\partial_t b_i + y \partial_x b_i + \left[-(x + \gamma y) \mp \lambda_i\right] \partial_y b_i = \gamma b_i,
\label{diagonalized_system}
\end{equation}
where $\lambda_i$ is the $i$-th eigenvalue. We will now prove that when the expansion
is truncated up to $a_{n-1}$, i.e., $A$ is truncated to a $n\times n$
matrix, the eigenvalues $A$ are the zeros of~$H_n$.

\begin{thm}
\label{thm:hermite_zeros}
The eigenvalues of $A_n$, the $n\times n$ truncated version of
the matrix in Eqn.~\ref{hermite_matrix}, are the zeros of the $n$-th
order probabilist's Hermite polynomial~$H_n$.
\end{thm}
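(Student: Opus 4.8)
The plan is to recognize the matrix $A_n$ from Eqn.~\ref{hermite_matrix} as precisely the Jacobi (three-term recurrence) matrix associated with the probabilist's Hermite polynomials, and to invoke the classical correspondence between the roots of orthogonal polynomials and the eigenvalues of their truncated Jacobi matrices. Concretely, I would first write down the three-term recurrence for $H_k$ in the monic-appropriate form used in the paper, namely $H_{k+1}(\lambda) = \lambda H_k(\lambda) - k H_{k-1}(\lambda)$ (which follows immediately from the two stated identities $H_{k+1} = \lambda H_k - H_k'$ and $H_k' = k H_{k-1}$). This recurrence says exactly that the vector $\big(H_0(\lambda), H_1(\lambda), \dots, H_{n-1}(\lambda)\big)^T$ is a near-eigenvector of the tridiagonal operator with subdiagonal entries $1$, superdiagonal entries $(1,2,3,\dots)$, and zero diagonal — i.e., of $A$ — with the "defect" living entirely in the last coordinate and proportional to $H_n(\lambda)$.

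The key steps, in order: (1) Let $\lambda_*$ be any zero of $H_n$, so $H_n(\lambda_*) = 0$, and set $v = \big(H_0(\lambda_*), H_1(\lambda_*), \dots, H_{n-1}(\lambda_*)\big)^T$. (2) Check row by row that $A_n v = \lambda_* v$: for rows $1$ through $n-1$ this is just the recurrence $\lambda_* H_k(\lambda_*) = H_{k+1}(\lambda_*) + k\,H_{k-1}(\lambda_*)$ rewritten with the correct placement of the coefficients $1$ and $k$ matching the sub/superdiagonal of $A_n$; for the last row ($k = n-1$) the recurrence would produce an $H_n(\lambda_*)$ term, but that vanishes by assumption, so the identity $\lambda_* H_{n-1}(\lambda_*) = H_{n-2}(\lambda_*)$ (with the appropriate coefficient) is exactly what the last row of $A_n v = \lambda_* v$ demands. (3) Note $v \neq 0$ since $H_0 \equiv 1$, so $\lambda_*$ is a genuine eigenvalue. (4) Since $H_n$ has $n$ distinct real zeros (a standard fact for orthogonal polynomials, and consistent with Theorem~\ref{thm:oscillatorishyperbolic}, which shows $A_n$ is similar to a real symmetric matrix and hence has $n$ real eigenvalues counted with multiplicity), and $A_n$ is $n\times n$, the $n$ zeros of $H_n$ exhaust the spectrum of $A_n$.

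The only mild subtlety — and the step I would be most careful about — is bookkeeping the asymmetry of $A$: the sub- and superdiagonals are $1$ and $k$ rather than both equal to $\sqrt{k}$, so one must verify that the particular unnormalized vector $v$ of Hermite values (not the normalized one) is the right eigenvector and that the coefficients land in the correct rows. This is precisely the asymmetry already handled in the proof of Theorem~\ref{thm:oscillatorishyperbolic} via the diagonal conjugation by $D = \diag(\sqrt{0!}, \sqrt{1!}, \dots)$; one could alternatively run the whole argument on the symmetrized matrix $S = DAD^{-1}$ using the normalized Hermite functions, but working directly with $A$ and the recurrence as stated is cleaner. No genuine obstacle is expected — the result is the standard Golub–Welsch / Jacobi-matrix characterization of Gaussian quadrature nodes — so the write-up is essentially a direct verification.
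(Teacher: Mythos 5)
Your route is genuinely different from the paper's and is a valid one, but one concrete step needs repair. The paper proves the statement by induction on the characteristic polynomial: expanding $\det(A_{n+1}-\lambda I)$ along the last row gives $\det B_{n+1} = -\lambda\det B_n - n\det B_{n-1}$, which is the recurrence for $H_{n+1}(-\lambda)$, so $\det(A_n-\lambda I)=H_n(-\lambda)$ and the result follows from the parity of $H_n$. You instead build explicit eigenvectors from the three-term recurrence (the Jacobi-matrix/Golub--Welsch picture). The repair needed in your step (2): with the paper's matrix, row $k$ of $A_n$ has subdiagonal entry $1$ and superdiagonal entry $k+1$, so $(A_n v)_k = H_{k-1}+(k+1)H_{k+1}$, which does \emph{not} equal $\lambda H_k = H_{k+1}+kH_{k-1}$; the vector $\bigl(H_0(\lambda_*),\dots,H_{n-1}(\lambda_*)\bigr)^T$ is a \emph{left} eigenvector of $A_n$ (equivalently a right eigenvector of $A_n^T$), and the correct right eigenvector is the rescaled $v_k = H_k(\lambda_*)/k!$ --- exactly the conjugation by $D=\diag(\sqrt{0!},\sqrt{1!},\dots)$ squared that you allude to. You flag this bookkeeping as the delicate point, but as literally written the row-by-row check fails, so the rescaling (or passing to $A_n^T$, which has the same spectrum) must be made explicit. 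Once fixed, the two proofs trade off as follows: the paper's induction yields the full characteristic polynomial at once, so all $n$ eigenvalues (with multiplicity) are accounted for without invoking the separate fact that $H_n$ has $n$ distinct real zeros, which your step (4) needs; your argument, on the other hand, produces the eigenvectors explicitly, which is a real bonus here because the very next step in the paper diagonalizes $A = P\Lambda P^{-1}$ to decouple the hyperbolic system --- your proof hands over $P$ (with entries $H_k(\lambda_i)/k!$) for free, whereas the determinant argument says nothing about $P$.
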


\begin{proof}

We proceed by induction to prove that $\det (A_n - \lambda I) =
H_n(-\lambda)$, which will then, by the symmetry of $H_n$, prove
our result.

Let $B_n = A_n - \lambda I$. Indeed, $\det B_1 = -\lambda$ and
$\det B_2 = \lambda^2 - 1$. For the general case,
\[
B_{n+1} =
\begin{pmatrix}
  &        &   &   & 0 \\
  & B_n    &   &   & \vdots \\
  &        &   &   & 0 \\
  &        &   &   & n \\
0 & \cdots & 0 & 1 & -\lambda
\end{pmatrix}
\]
Therefore,
\[
\det B_{n+1} = -\lambda \det B_n + (-1)^n n (-1)^{n-1} \det
B_{n-1} = - \lambda \det B_n - n \det B_{n-1},
\]
which is the recurrence relation satisfied by $H_n(-\lambda)$.

\end{proof}

The characteristic curves of Eqn.~\ref{diagonalized_system} are
given by
\begin{align*}
\dot{x} &= y \\
\dot{y} &= - x - \gamma y \mp \lambda_i \\
\dot{b}_i &= \gamma b_i.
\end{align*}
In other words, the characteristics are damped oscillators where
the equilibrium position is given by the eigenvalues of~$\mp A$.
The exponential growth of $b_i$ along a trajectory is due to the
contraction in phase space produced by the dissipation~$\gamma$.

\subsubsection{Results}

We now show results obtained by using transport theory approach on case~3 ($\mu=0$, $\sigma=1$)  for the switching oscillator (Eqn.~\ref{eq:SHMswitch}).

In Fig.~\ref{Fig:MC_vs_transport_colormaps}, we show a series of probability distribution snapshots for Monte Carlo ($5000$ samples) and the transport operator method (for case~3), gridded in the $(x,y)$ plane. Note that we set $y=\dot x$, as defined in the first part of the paper. The Monte Carlo color map snapshots show that the distribution lies on a one-dimensional manifold (in two dimensional space). The one-dimensional nature of the distribution arises because we chose deterministic initial conditions. As discussed previously, all trajectories in case~3 with $\lambda \geq 0$ converge to the origin, resulting in a jump in the cumulative distribution function (CDF).

\begin{figure}%
\centerline{%
\includegraphics[width=\columnwidth]{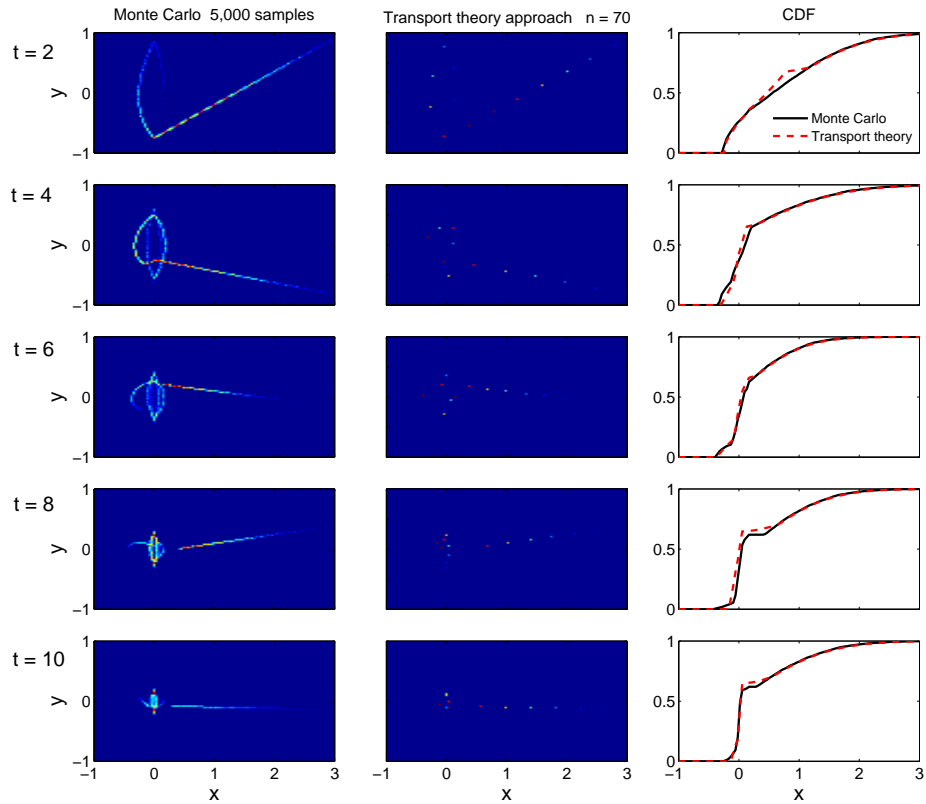}}%
\caption{Snapshots at $t=2,4,6,8,10$ of the gridding from 5000 samples of Monte Carlo and an order 70 expansion using the transport theory based method. The right column compares the one-dimensional cumulative distribution function for the corresponding times.}%
\label{Fig:MC_vs_transport_colormaps}%
\end{figure}

\begin{figure}%
\centerline{%
\includegraphics[width=0.7\columnwidth]{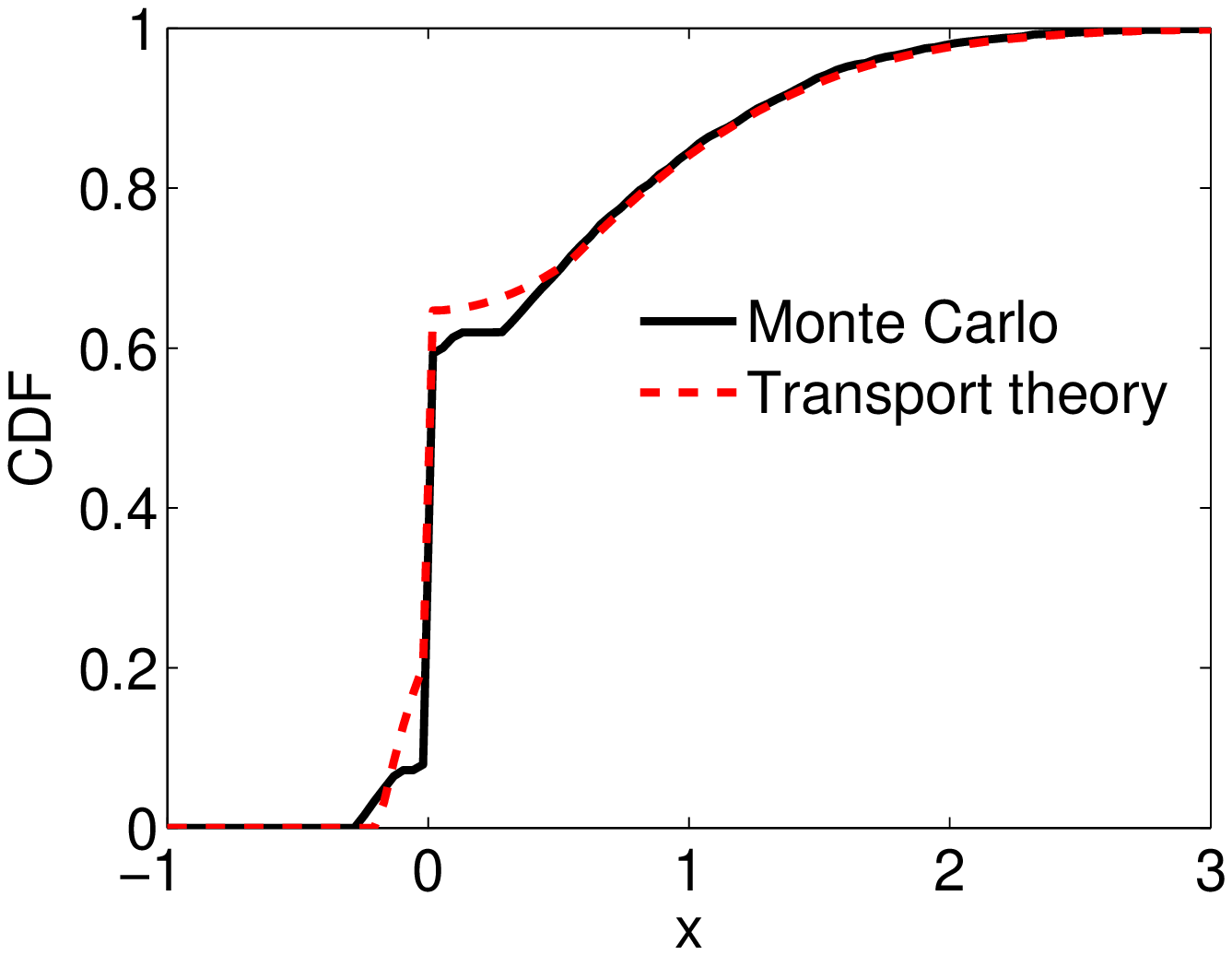}}%
\caption{Comparison at $t=18$ of the cumulative distribution function from 5000 samples of Monte Carlo (solid) and an order~70 expansion using the transport theory based method (dashed).}%
\label{Fig:MC_vs_transport_CDF}%
\end{figure}

For $\lambda < 0$, however, the trajectories converge asymptotically to either $+\lambda$ or $-\lambda$. This convergence to $+\lambda$ or $-\lambda$ is highly dependent on the individual trajectory and results in a fragmentation of the output distribution. Close to convergence ($t=18$), very few trajectories converge to a point in the range $0.2 < x < 0.4$, as seen in the flat region of the CDF in Fig.~\ref{Fig:MC_vs_transport_CDF}. The transport-based method captures the singularity at the origin accurately, but is unable to accurately capture the fragmentation. This is because the method samples the distribution sparsely (determined by the order of expansion), resulting in UQ acceleration. However, this sparsity makes the method miss such fine details. On using a high order of expansion ($n=70$), some samples partially capture the structure around $x=0$. Note that a much lower order expansion accurately captures the jump at the origin and the asymptotic ($x > 1$) shape of the CDF.

As in Fig.~\ref{Fig:case3}, we compare Monte Carlo (5000 samples) with the transport theory approach (orders~$15$ and~$70$ expansion) in Fig.~\ref{Fig:pde_approach_mean_and_var_case3}. The $\mbox{L}_{\infty}$ (maximum) errors for $n=15$ are $9.56\times 10^{-2}$ (mean) and $7.95\times 10^{-2}$ (variance). For $n=70$ we get $\mbox{L}_{\infty}$ errors of $5.28\times 10^{-2}$ and $4.92\times 10^{-2}$ for mean and variance respectively. As shown in Fig.~\ref{Fig:case3}, the method performs reasonably well, however, the results are not nearly as good as those obtained using hybrid polynomial chaos with the Wiener-Haar wavelet expansion in section~\ref{Sec:hpc}. However, with a better choice of basis functions, one does expect better results. The transport operator theory is attractive as it appears to be more versatile. In general, the transport operator approach is applicable to hybrid dynamical systems with overlapping modes of operation (by constructing multiple PDEs for the overlapping mode). In contrast, the hybrid polynomial chaos method suffers from the disadvantage of being inapplicable to such systems.
\begin{figure}
  \subfigure[]{\includegraphics[scale=0.4]{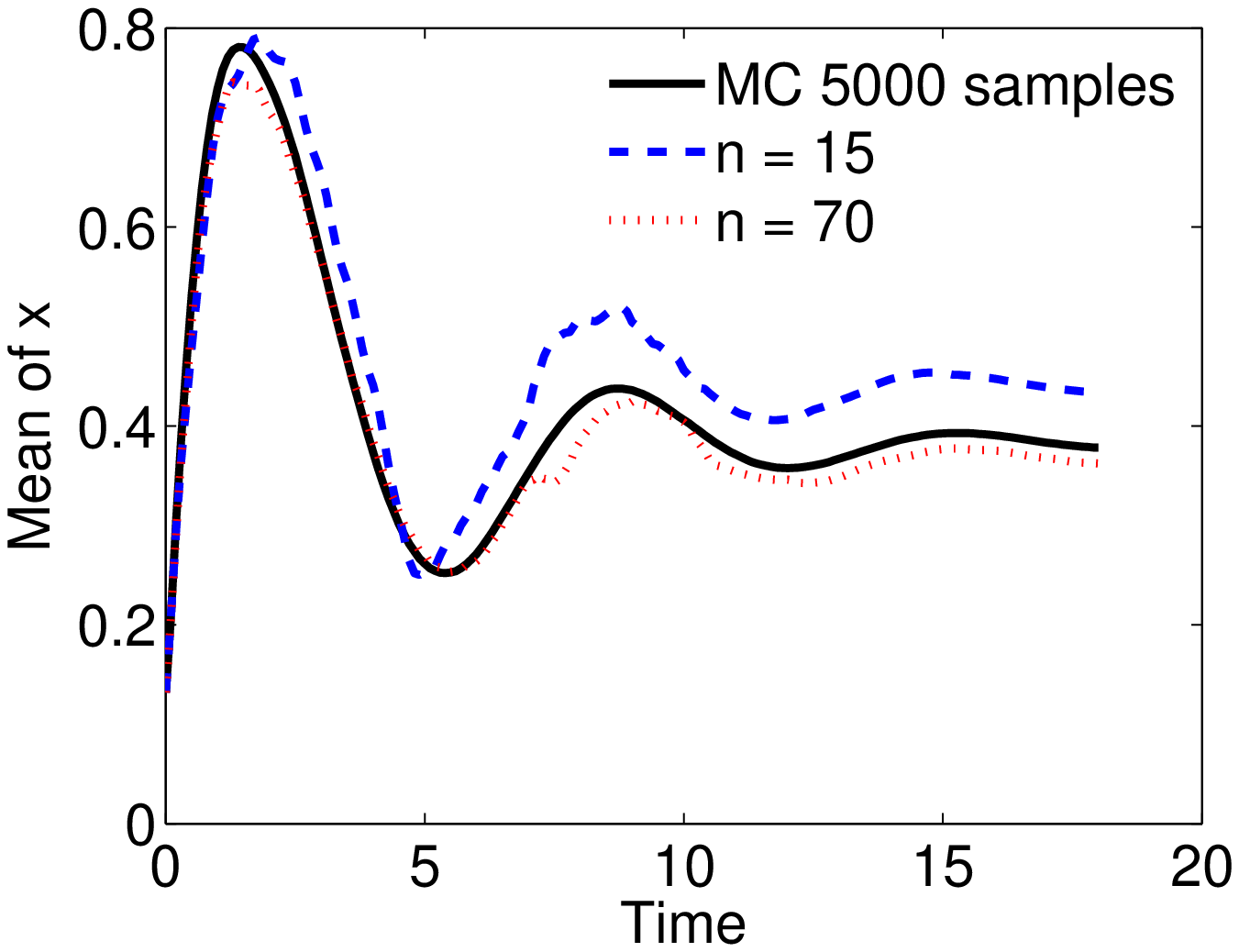}}
  \subfigure[]{\includegraphics[scale=0.4]{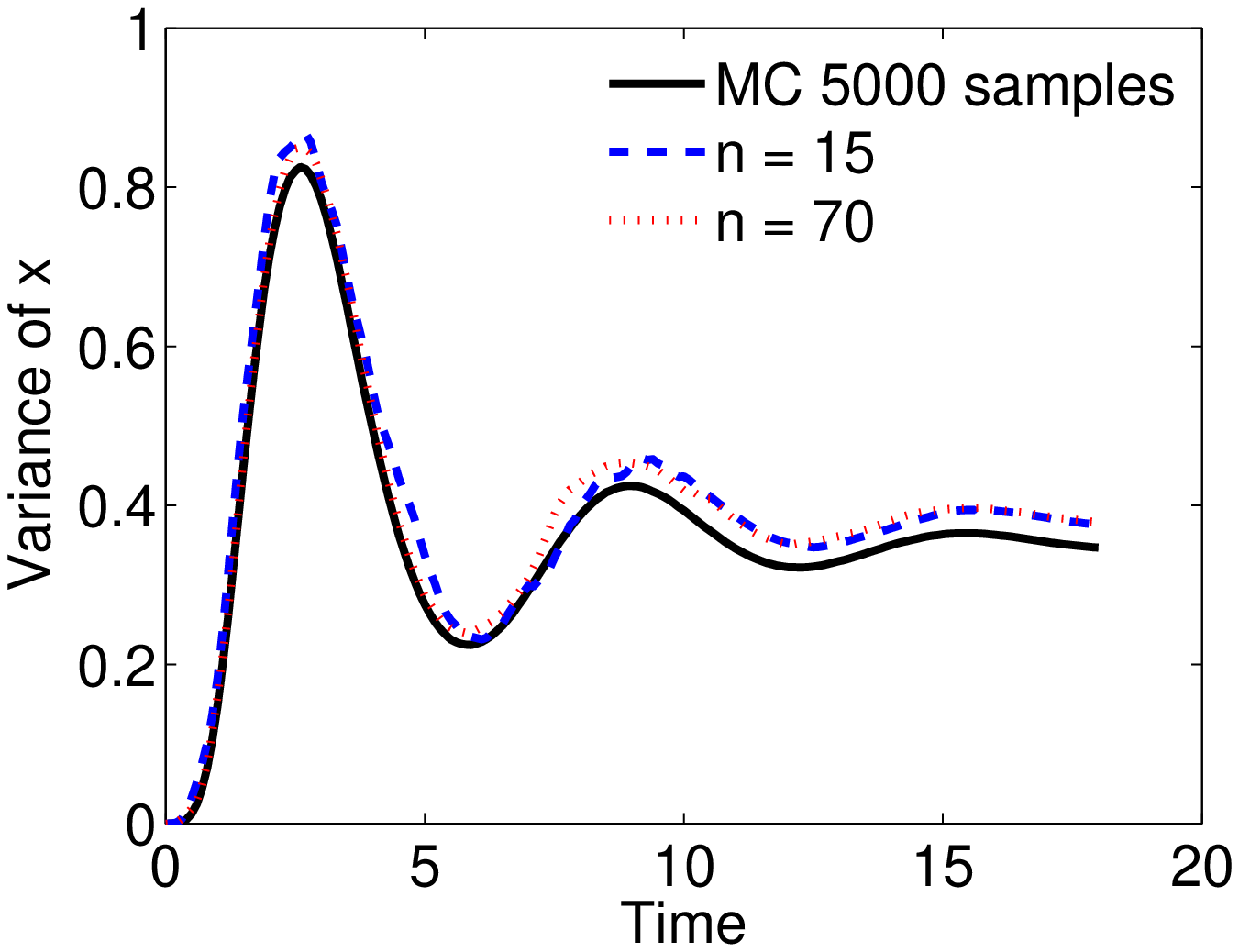}}
  \caption{Comparison of a) predicted mean and b) predicted variance of $x(t;\lambda)$ for case~3. The curves show a 5000-sample Monte Carlo run, and 15 \& 70 term expansions using the transport theory approach. \label{Fig:pde_approach_mean_and_var_case3}}
\end{figure}

\section{Conclusions}\label{Sec:conclusions}
As the modeling of hybrid dynamical systems becomes increasingly
important for modern day engineering applications such as
electrical and biological networks, air traffic systems,
communication networks, etc., quantifying uncertainty in these
systems is going to become a central concern. Since uncertainty
quantification allows one to compute moments of output
distributions in the presence of parametric
uncertainty, these techniques will be used to aid decisions related to robust system design and performance.

In this work, we have made the first attempts to develop fast
uncertainty quantification methods targeted for hybrid dynamical
systems. In particular, we extended polynomial chaos methods, a
popular technique for propagating uncertainty through smooth systems, to hybrid dynamical systems. We also
developed methods to handle state resets within the polynomial
chaos framework by using boundary layer approximations. We then
applied this new approach to perform uncertainty quantification
on switching harmonic oscillators and the bouncing ball examples. We also
demonstrated the efficacy of using Wiener-Haar expansions~\cite{Najm2009, LeMaitre2004,LeMaitre2004a} with our
hybrid polynomial chaos approach for quantifying uncertainty in hybrid systems that
give rise to multi-modal distributions or become increasingly
oscillatory in time. Finally, we showed how a transport theory based approach can capture naturally-emerging discontinuities in the distribution. Future efforts involve providing rigorous error bounds for Wiener-Haar expansions in the hybrid polynomial chaos setting with boundary layer expansions. We are also extending our hybrid polynomial chaos approach to networks of hybrid dynamical systems using our recent work on propagating uncertainty through complex networks~\cite{Tuhin_iter}. We also intend to extend the transport operator based UQ method to hybrid systems with overlapping modes of operation.

\section{Acknowledgements}
The authors thank Habib Najm for pointing us to his work on Wiener-Haar based polynomial chaos expansion and his insightful input. We also thank Alessandro Pinto and George Mathew for valuable discussions related to hybrid dynamical systems.

\bibliographystyle{unsrt}
\bibliography{UQHybridPaper}

\begin{thebibliography}{10}

\bibitem{McQMc}
R.~E. Caflisch.
\newblock Monte {C}arlo and quasi-{M}onte {C}arlo methods.
\newblock {\em Acta Numerica}, 7:1--49, 1998.

\bibitem{Response1:book}
R.~H. Myers, D.~C. Montgomery, and C.~M. Anderson-Cook.
\newblock {\em Response Surface Methodology}.
\newblock Wiley, third edition, 2009.

\bibitem{Response2:book}
T.~Hastie, R.~Tibshirani, and J.~Friedman.
\newblock {\em The Elements of Statistical Learning}.
\newblock Springer, second edition, 2009.

\bibitem{PolyReview}
X.~Wan and G.~E. Karniadakis.
\newblock Recent advances in polynomial chaos methods and extensions.
\newblock In {\em Computational Uncertainty in Military Vehicle Design Meeting
  Proceedings}. NATO/OTAN, Paper Reference Number: RTO-MP-IST-999, 2008.

\bibitem{Wiener}
N.~Wiener.
\newblock The homogeneous chaos.
\newblock {\em American Journal of Mathematics}, 60(4):897--936, 1938.

\bibitem{Allen2009}
M.~S. Allen and J.~A. Camberos.
\newblock Comparison of uncertainty propagation / response surface techniques
  for two aeroelastic systems,".
\newblock In {\em 50th AIAA Structures, Structural Dynamics, and Materials
  Conference, Palm Springs, California, May 4-7, 2009}, 2009.

\bibitem{Elman2011}
H.~C. Elman, C.~W. Miller, E.~T. Phipps, and R.~S. Tuminaro.
\newblock Assessment of collocation and {G}alerkin approaches to linear
  diffusion equations with random data.
\newblock {\em Int. J. Uncertainty Quantification}, 1:19--23, 2011.

\bibitem{Ghanem1998}
R.~Ghanem.
\newblock Probabilistic characterization of transport in heterogeneous media.
\newblock {\em Comput. Methods Appl. Mech. Engng.}, 158:199--220, 1998.

\bibitem{Najm2009}
H.~N. Najm, B.~J. Debusschere, Y.~M. Marzouk, S.~Widmer, and O.~P.
  Le~Ma{\^\i}tre.
\newblock Uncertainty quantification in chemical systems.
\newblock {\em Int. J. Numer. Meth. Engng.}, 80:789--814, 2009.

\bibitem{TuhinPoly}
T.~Sahai, V.~Fonoberov, and S.~Loire.
\newblock Uncertainty as a stabilizer of the head-tail ordered phase in
  carbon-monoxide monolayers on graphite.
\newblock {\em Physical Review B}, 80(11):115413, 2009.

\bibitem{Xiu2003}
D.~Xiu and G.~E. Karniadakis.
\newblock Modeling uncertainty in flow simulations via generalized polynomial
  chaos.
\newblock {\em J. Comp. Phys.}, 187:137--167, 2003.

\bibitem{Zabaras2008}
N.~Zabaras and B.~Ganapathysubramanian.
\newblock A scalable framework for the solution of stochastic inverse problems
  using a sparse grid collocation approach.
\newblock {\em J. Comp. Phys.}, 227:4697--4735, 2008.

\bibitem{Multigpc}
X.~Wan and G.~E. Karniadakis.
\newblock An adaptive multi-element generalized polynomial chaos method for
  stochastic differential equations.
\newblock {\em Journal of Computational Physics}, 209(2):617--642, 2005.

\bibitem{SHS:book}
C.~G. Cassandras and J.~Lygeros.
\newblock {\em Stochastic Hybrid Systems}.
\newblock CRC Taylor \& Francis, first edition, 1991.

\bibitem{ZenoReg}
K.~H. Johansson, M.~Egerstedt, J.~Lygeros, and S.~Sastry.
\newblock Regularization of {Z}eno hybrid automata.
\newblock {\em Systems and Control Letters}, 38(3):141--150, 1999.

\bibitem{HybridCompCont:book}
R.~Alur, C.~Belta, F.~Ivanic, V.~Kumar, M.~Mintz, G.~Pappas, H.~Rubin,
  J.~Schug, and G.~J. Pappas.
\newblock {\em Hybrid Systems: Computation and Control}.
\newblock Springer-Verlag, first edition, 2001.

\bibitem{BioEx1}
N.~Chabrier and F.~Fages.
\newblock Symbolic model checking of biochemical networks.
\newblock In {\em Computational Methods in Systems Biology (CMSB'03), volume
  2602 of LNCS}, pages 149--162. Springer-Verlag, 2003.

\bibitem{AirTraffic}
C.~Tomlin, G.~J. Pappas, and S.~Sastry.
\newblock Conflict resolution for air traffic management: A study in multiagent
  hybrid systems.
\newblock {\em IEEE Transactions on Automatic Control}, 43:509--521, 1998.

\bibitem{CommNet}
J.~P. Hespanha.
\newblock Stochastic hybrid systems: Application to communication networks.
\newblock In {\em Hybrid Systems: Computation and Control, ser. Lect. Notes in
  Comput. Science}, pages 387--401. Springer-Verlag, 2004.

\bibitem{Zeno}
E.~Asarin, O.~Maler, and A.~Pnueli.
\newblock Symbolic controller synthesis for discrete and timed systems.
\newblock In {\em Hybrid Systems II, LNCS 999}, pages 1--20. Springer, 1995.

\bibitem{Haar1910}
A.~Haar.
\newblock Zur {T}heorie der orthogonalen {F}unktionensysteme.
\newblock {\em Mathematische Annalen}, 69:331--371, 1910.

\bibitem{LeMaitre2004}
O.~P. Le~Ma{\^\i}tre, O.~M. Knio, H.~N. Najm, and R.~G. Ghanem.
\newblock Uncertainty propagation using {W}iener-{H}aar expansions.
\newblock {\em J. Comp. Phys.}, 197:28--57, 2004.

\bibitem{LeMaitre2004a}
O.~P. Le~Ma{\^\i}tre, H.~N. Najm, R.~G. Ghanem, and O.~M. Knio.
\newblock Multi-resolution analysis of {W}iener-type uncertainty propagation
  schemes.
\newblock {\em J. Comp. Phys.}, 197:502--531, 2004.

\bibitem{QMC}
H.~Niederreiter.
\newblock Quasi-{M}onte {C}arlo methods and pseudo-random numbers.
\newblock {\em Bulletin of the American Mathematical Society}, 84(6):957--1041,
  1978.

\bibitem{BeyondWienerAskey}
X.~Wan and G.~E. Karniadakis.
\newblock Beyond {W}iener-{A}skey expansions: Handling arbitrary {PDF}s.
\newblock {\em Journal of Scientific Computing}, 27:455--464, 2006.

\bibitem{Ogura}
H.~Ogura.
\newblock Orthogonal functions of the {P}oisson processes.
\newblock {\em IEEE Transactions on Information Theory}, 18(4):473--481, 1972.

\bibitem{Niederreiter1992}
H.~Niederreiter.
\newblock {\em Random Number Generation and Quasi-Monte Carlo Methods}.
\newblock SIAM, 1992.

\bibitem{InvTran:book}
L.~Devroye.
\newblock {\em Non-Uniform Random Variate Generation}.
\newblock Springer-Verlag, first edition, 1986.

\bibitem{Longterm}
X.~Wan and G.~E. Karniadakis.
\newblock Long-term behavior of polynomial chaos in stochastic flow
  simulations.
\newblock {\em Computer Methods in Applied Mechanics and Engineering},
  195:5582--5596, 2006.

\bibitem{Cit:FPE:book}
H.~Risken.
\newblock {\em The Fokker-Planck Equation: Methods of Solution and
  Applications}.
\newblock Springer, second edition, 1989.

\bibitem{Cit:FPE:book2}
Crispin~W Gardiner.
\newblock {\em Handbook of stochastic methods: for physics, chemistry and the
  natural sciences; 3rd ed.}
\newblock Springer series in synergetics. Springer, Berlin, 2004.

\bibitem{Tartakovsky2011}
Daniel~M. Tartakovsky and Svetlana Broyda.
\newblock {P}{D}{F} equations for advective-reactive transport in heterogeneous
  porous media with uncertain properties.
\newblock {\em J. Contam. Hydrol.}, 120-121:129--140, 2011.

\bibitem{Luo2006}
Wuan Luo.
\newblock {\em Wiener Chaos Expansion and Numerical Solutions of Stochastic
  Partial Differential Equations}.
\newblock PhD thesis, California Institute of Technology, 2006.

\bibitem{Wang2012}
Peng Wang and Daniel~M. Tartakovsky.
\newblock Uncertainty quantification in kinematic-wave models.
\newblock {\em J. Comput. Phys.}, 231:7868--7880, 2012.

\bibitem{Tryoen2010a}
J.~Tryoen, O.~Le~Ma{\^\i}tre, M.~Ndjinga, and A.~Ern.
\newblock Intrusive projection methods with upwinding for uncertain nonlinear
  hyperbolic systems.
\newblock {\em J. Comp. Phys.}, 229:6485--6511, 2010.

\bibitem{Tuhin_iter}
Amit Surana, Tuhin Sahai, and Andrzej Banaszuk.
\newblock Iterative methods for scalable uncertainty quantification in complex
  networks.
\newblock {\em Int. J. Uncertainty Quantification}, 2:413--439, 2012.

\end{thebibliography}
\end{document}